\newtheorem{thm}{Theorem}
\newtheorem{lem}{Lemma}
\newtheorem{prop}{Proposition}
\newtheorem*{rep@theorem}{\rep@title}
\newcommand{\newreptheorem}[2]{%
\newenvironment{rep#1}[1]{%
 \def\rep@title{#2 \ref{##1}}%
 \begin{rep@theorem}}%
 {\end{rep@theorem}}}
\theoremstyle{definition}
\newtheorem{defn}{Definition}
\newtheorem{remark}{Remark}
\newcommand{\calX}{\mathcal{X}}
\renewcommand{\tilde}{\widetilde}
\newcommand{\defined}{\triangleq}
\newcommand{\TV}{\mathsf{TV}}
\definecolor{light-gray}{gray}{.90}
\newmdenv[%
  backgroundcolor=red, %
  linecolor=black,
  linewidth =1pt,%
  skipabove = 10pt,%
  skipbelow = 10pt
]{TODO}
\definecolor{quotemark}{gray}{0.7}
\def\fquote{%
    \@ifnextchar[{\fquote@i}{\fquote@i[]}%]
           }%
\def\fquote@i[#1]{%
    \def\tempa{#1}%
    \@ifnextchar[{\fquote@ii}{\fquote@ii[]}%]
                 }%
\def\fquote@ii[#1]{%
    \def\tempb{#1}%
    \@ifnextchar[{\fquote@iii}{\fquote@iii[]}%]
                      }%
\def\fquote@iii[#1]{%
    \def\tempc{#1}%
    \vspace{1em}%
    \noindent%
    \begin{list}{}{%
         \setlength{\leftmargin}{0.1\textwidth}%
         \setlength{\rightmargin}{0.1\textwidth}%
                  }%
         \item[]%
         \begin{picture}(0,0)%
         \put(-15,-5){\makebox(0,0){\scalebox{3}{\textcolor{quotemark}{``}}}}%
         \end{picture}%
         \begingroup\itshape}%
 \def\endfquote{%
 \endgroup\par%
 \makebox[0pt][l]{%
 \hspace{0.8\textwidth}%
 \begin{picture}(0,0)(0,0)%
 \put(15,15){\makebox(0,0){%
 \scalebox{3}{\color{quotemark}''}}}%
 \end{picture}}%
 \ifx\tempa\empty%
 \else%
    \ifx\tempc\empty%
       \hfill\rule{100pt}{0.5pt}\\\mbox{}\hfill\tempa,\ \emph{\tempb}%
   \else%
       \hfill\rule{100pt}{0.5pt}\\\mbox{}\hfill\tempa,\ \emph{\tempb},\ \tempc%
   \fi\fi\par%
   \vspace{0.5em}%
 \end{list}%
 }%
\newcommand{\KL}{D_{\mathsf{KL}}}
\DeclareMathOperator{\arcsinh}{arcsinh}
\newcommand{\blfootnote}[1]{%
  \begingroup
  \renewcommand{\thefootnote}{}% Remove footnote number
  \footnotetext{#1}%
  \addtocounter{footnote}{-1}% Reset footnote counter
  \endgroup
}
\begin{document}
% \title{Tilted Sampling for Generative Model Alignment}
\title{Soft Best-of-$n$ Sampling for Model Alignment
\thanks{This work is supported by the National Science Foundation under grants CIF 2312667, FAI 2040880, and CIF 2231707 and the NSF-GRFP under grant DGE-2140743. We also thank Ahmad Beirami for insightful discussions about this project.}}% for\\ Generative Model Alignment}

% %%% Single author, or several authors with same affiliation:
% \author{%
%  \IEEEauthorblockN{Author 1 and Author 2}
% \IEEEauthorblockA{Department of Statistics and Data Science\\
%                    University 1\\
 %                   City 1\\
  %                  Email: author1@university1.edu}% }

%%% Several authors with up to three affiliations: 
\author{%
  \IEEEauthorblockN{                 Claudio Mayrink Verdun\textsuperscript{*}\IEEEauthorrefmark{2}, Alex Oesterling\textsuperscript{*}\IEEEauthorrefmark{2},
                   Himabindu Lakkaraju\IEEEauthorrefmark{3},
                    Flavio P. Calmon\IEEEauthorrefmark{2}}
  \IEEEauthorblockA{\IEEEauthorrefmark{2}%
                   Harvard University, Allston, MA, USA,
                    \{claudioverdun,flavio\}@seas.harvard.edu, aoesterling@g.harvard.edu}
  % \IEEEauthorblockA{\IEEEauthorrefmark{3}%
                    % Google DeepMind,
                    % New York, NY, USA,
                    % beirami@google.com}
  \IEEEauthorblockA{\IEEEauthorrefmark{3}%
                   Harvard Business School, Allston, MA, USA,
                    hlakkaraju@hbs.edu}
}

\maketitle
%%%%%%
%% Abstract: 
%% If your paper is eligible for the student paper award, please add
%% the comment "THIS PAPER IS ELIGIBLE FOR THE STUDENT PAPER
%% AWARD." as a first line in the abstract. 
%% For the final version of the accepted paper, please do not forget
%% to remove this comment!
%%
\blfootnote{\textsuperscript{*}Equal contribution.} 
\blfootnote{This work is supported by the National Science Foundation under grants CIF 2312667, FAI 2040880, and CIF 2231707 and the NSF-GRFP under grant DGE-2140743. We also thank Ahmad Beirami for insightful discussions about this project.}
\setcounter{footnote}{0}

\begin{abstract}
   Best-of-$n$ (BoN) sampling is a practical approach for aligning language model outputs with human preferences without expensive fine-tuning. BoN sampling is performed by generating $n$ responses to a prompt and then selecting the sample that maximizes a reward function. BoN yields high reward values in practice at a distortion cost, as measured by the KL-divergence between the sampled and original distribution. This distortion is coarsely controlled by varying the number of samples: larger $n$ yields a higher reward at a higher distortion cost.

We introduce Soft Best-of-$n$ sampling, a generalization of BoN that allows for smooth interpolation between the original distribution and reward-maximizing distribution through a temperature parameter $\lambda$. We establish theoretical guarantees showing that Soft Best-of-$n$ sampling converges sharply to the optimal tilted distribution at a rate of $O(1/n)$ in KL and the expected (relative) reward. For sequences of discrete outputs, we analyze an additive reward model that reveals the fundamental limitations of blockwise sampling.
\end{abstract}

\section{Introduction}
\label{sec:intro}

Large Language Models (LLMs) are a type of generative machine learning model that, given a string of input tokens, output a distribution $P$ that approximates the likelihood of the next tokens in a sequence. 
By learning from large corpora of text data, LLMs have demonstrated remarkable capabilities in natural language understanding, reasoning, and generation across a range of domains. However, training models for accurate next token prediction does not explicitly optimize for \emph{alignment} between the model's outputs and human preferences \cite{hadfield2016cooperative, leike2018scalable}. As a result, the behavior of LLMs may not always conform to the intentions or expectations of human users \cite{bender2021dangers, bommasani2021opportunities}.

Given a reference distribution  $P$ over a finite alphabet $\mathcal{X}$ and a reward function $r: \mathcal{X} \rightarrow \mathbb{R}$, a popular approach to address the \emph{alignment problem} is via an optimization formulation that is essentially equivalent to the well-known information projection optimization \cite{csiszar2003information, csiszar1975divergence}. The goal is to find a new distribution $P^*$ that is similar to $P$ while maximizing the expected reward:
\begin{equation}
\label{eq:KL-optimization}
\begin{aligned}
    \max_{P^*\in\Delta_{\calX}}~\mathbb{E}_{P^*}\left[r(X) \right]\
    \mbox{subject to } \KL(P^*\|P) \leq \epsilon.
\end{aligned}
\end{equation} 
In LLMs, $P$ is the distribution of whole responses given the prompt (often referred to as ``policy'' in the alignment literature) and $r$ is a proxy model for the human preferences that we seek to align $P$ with. The function $r$ itself is usually another language model trained to score or rate responses \cite{ouyang2022training}. In theory, the solution to this problem is a tilted distribution \cite{csiszar1975divergence}, which adjusts $P$ to place more weight on high-reward responses. The optimal distribution is given by:
\begin{equation}
    \label{eq:Pstar-def}
    P^*_\lambda (x) = \frac{P(x)e^{r(x)/\lambda}}{\mathbb{E}_P[e^{r(X)/\lambda}]}.
\end{equation}
where $\lambda>0$ controls the trade-off between reward maximization and the divergence from the reference distribution.
Variations of information projection and tilted distributions are common in information theory\cite{chentsov1968nonsymmetrical,csiszar1995generalized, csiszar2003information,alghamdi2020model} and have appeared across various fields, including Monte Carlo simulation, large deviations theory, and econometrics \cite{nakayama1992efficient, hofert2011sampling,fuh2024efficient, imbens1998information,kitamura1997information,sadowsky1990large,siegmund1976importance}; see also \cite{asmussen2007stochastic}. 

In generative models, we do not have an analytical expression for $P$ nor $r(X)$, and thus cannot compute $P^*$ directly; instead, we can only sample from $P$ and evaluate $r(X)$ for individual samples. When limited to only samples from $P$, there are generally two strategies for alignment: model fine-tuning using reinforcement learning to maximize reward with a KL penalty \cite{christiano2017deep, stiennon2020learning, rafailov2024direct}, and inference-time methods such as Best-of-$n$ (BoN) which samples $n$ generations from $P$ and then picks the sequence with the highest reward. While fine-tuning methods require human-annotated preference data and additional GPU-intensive training, BoN has emerged as a simple yet effective approach to alignment. 

Empirically, BoN has been shown to be competitive with fine-tuning approaches in terms of KL-reward tradeoffs \cite{gao2023scaling, mudgalcontrolled}, and theoretically, it has been shown to be equivalent to the fine-tuned solution asymptotically \cite{yang2024asymptotics} while \cite{mroueh2024information} bounds the sub-optimality of BoN in the non-asymptotic regime. However, BoN always selects the reward-maximizing generation, limiting our ability to navigate and control the KL-reward tradeoff.

We introduce \emph{Soft Best-of-$n$} sampling, a generalization of BoN sampling that provides non-asymptotic guarantees for approximating the optimal tilted distribution $P_\lambda^*$ with a finite number of samples $n$. While BoN can be viewed as a special case of Soft Best-of-$n$, the latter offers a more flexible approach by incorporating a temperature parameter $\lambda$. This allows for a smooth interpolation between the reference distribution $P$ and the reward-maximizing distribution $P_{\lambda^*}$, enabling finer control over the KL-reward tradeoff.

In this paper, we focus on deriving theoretical guarantees on the performance of Soft Best-of-$n$ sampling relative to the optimal distribution $P_{\lambda}^*$. Our main contributions are:
\begin{itemize}
    \item We establish sharp bounds on the convergence rate of  Soft Best-of-$n$ sampling to the optimal tilted distribution $P_\lambda^*$, showing that the KL-divergence between the two and the relative difference in reward both decrease at a rate of $O(1/n)$ (Theorems \ref{thm:KL-bound} and \ref{thm:relative_bound}) with explicit constants depending on the temperature parameter. We also provide a converse result showing that this rate cannot be improved in general (Theorem \ref{thm:KL-lowerbound}).
    
    \item We analyze the trade-off between reward maximization and approximating $P_\lambda^*$ and demonstrate that, unlike BoN, Soft Best-of-$n$ sampling can provably span the optimal KL-reward Pareto frontier. We show that, as long as $\lambda$ is not too small (at least $O(1/\log(n\epsilon)$), the Soft Best-of-$n$ sampling is $\epsilon$-close in  KL-divergence to $P_\lambda^*$ (Corollary \ref{cor:lambda_lower_bound}).
    
    \item We study an additive reward model that provides insights into the limitations of using Best-of-$n$ with blockwise sampling and motivates efficient sampling alternatives.
\end{itemize}

\noindent \textbf{Prior Results.} While multiple approaches to fine-tuning with a KL constraint \cite{schulman2017proximal, ouyang2022training, rafailov2024direct, gui2024bonbon} and efficiency improvements on BoN \cite{jinnai2024regularized, ichihara2025evaluation,amini2024variational, qiu2024treebon, zhang2024accelerating} have been recently proposed, in this work we consider the original BoN procedure. Prior works claim that the divergence of BoN sampling from the reference distribution scaled exactly as $\log(n)-(n-1)/n$ \cite{stiennon2020learning, nakano2021webgpt, hilton2022measuringgoodhart, gao2023scaling}. Remarkably, recent results by \cite[Thm. 1]{beirami2024theoretical} and \cite[Thm. 1]{mroueh2024information} prove that this is in fact an upper bound on the actual KL-divergence. Empirically, \cite{gao2023scaling} and \cite{hilton2022measuringgoodhart} found that the KL-reward tradeoff for both RL and BoN methods scaled as $\sqrt{\KL}$ between the optimized and base distribution, and \cite{mroueh2024information} demonstrated that this relationship is an information-theoretic upper bound. The above results characterize KL-reward relationships and the drift of BoN sampling from the reference distribution $P$. However, they do not provide any insights into the optimality of BoN, i.e., its KL divergence from the exponential tilting solution $P_\lambda^*$ in \eqref{eq:Pstar-def}. \cite{yang2024asymptotics} show that BoN is asymptotically equivalent to the KL-constrained reinforcement learning solution and \cite{mroueh2024information} establish non-asymptotic guarantees between the two. We provide tight non-asymptotic bounds on the optimality of Soft Best-of-$n$, which, as a generalization of BoN and the base model, allow us to draw connections between the convergence of each sampling approach.

\section{Preliminaries}
We use uppercase letters (e.g., $X$) to denote random variables and lowercase letters (e.g., $x$) for their realizations. We use superscripts to denote sequence length (e.g., $x^m = (x_1, \ldots, x_m)$) and subscripts to denote sequence elements (e.g., $x_i$ is the $i$-the element of sequence $x^m$). For a finite set $\mathcal{X}$, we denote by $\Delta_{\mathcal{X}}$ the probability simplex over $\mathcal{X}$, i.e., $\Delta_{\mathcal{X}} = \left\{P : \mathcal{X} \to [0,1] \;\middle|\; \sum_{x \in \mathcal{X}} P(x) = 1\right\}$. We define the \emph{coefficient of variation} of a random variable $X$ with finite mean and variance as $\mathsf{CV}(X) \triangleq \sqrt{\frac{\mathsf{Var}(X)}{\mathbb{E}[X]^2}}$.

Let $P$ be a distribution over a finite alphabet $\mathcal{X}$ and $r : \mathcal{X} \to \mathbb{R}$ be a $P$-measurable reward function. Recall that the goal of alignment is to find a new distribution $P^*$ that is similar to $P$ but also maximizes the expected reward \eqref{eq:KL-optimization}, while being limited to only samples from $P$.

A popular approach to this problem is \emph{Best-of-$n$ (BoN) sampling} \cite{stiennon2020learning}, which draws $n$ samples from $P$ and returns the one with the highest reward.
\begin{defn}\label{defn:best-of-n}
For a fixed integer $n \geq 1$, the \emph{Best-of-$n$} sampling strategy consists of:
\begin{enumerate}
    \item Draw $X_1, \ldots, X_n$ i.i.d.\ from $P$;
    \item Compute $R_1 = r(X_1), \ldots, R_n = r(X_n)$;
    \item Return $Y = X_Z$, where $Z = \arg\max_{j=1,\ldots,n} R_j$.
\end{enumerate}
\end{defn}

While effective, BoN sampling can be viewed as an extreme case of a more general framework. We propose \emph{Soft Best-of-$n$ sampling}, which offers a smoother approach to trading off between reward maximization and distribution similarity.
\begin{defn}
\label{defn:n-tilt}
For a fixed integer $n \geq 1$ and $\lambda > 0$, the \emph{Soft Best-of-$n$ sampling} strategy consists of:
\begin{enumerate}
    \item Draw $X_1, \ldots, X_n$ i.i.d.\ from $P$;
    \item Draw $Z$ from $\{1, \ldots, n\}$ with distribution
   \begin{equation} \label{eq:n_tilt_def}
        \Pr(Z=i) = \frac{e^{r(X_i)/\lambda}}{\sum_{j=1}^n e^{r(X_j)/\lambda}};
    \end{equation}
    \item Return $Y = X_Z$.
\end{enumerate}
The distribution of $Y$ produced by the Soft Best-of-$n$ sampling for a fixed $\lambda>0$ is denoted by $P_{n,\lambda}$. 
\end{defn}
Here, $\lambda > 0$ is a temperature parameter that controls the trade-off between reward maximization and similarity to the original distribution $P$. As $\lambda \to 0$, Soft Best-of-$n$ sampling converges to BoN sampling, making the latter a special case of our more general framework. This relationship can also be understood through the lens of the ``softmax'' nature of the sampling probabilities: as $\lambda \rightarrow0$, the softmax becomes increasingly sharp, converging to the argmax function that characterizes BoN sampling. Conversely, as $\lambda \to \infty$, the sampling becomes uniform among the $n$ candidates, effectively recovering the original distribution $P$. Also, observe that Soft Best-of-$n$ sampling only requires i.i.d. samples from $P$ and computation of $r(X_i)$ for the selected samples. 
The distribution induced by Soft Best-of-$n$ sampling approximates the optimal solution \eqref{eq:Pstar-def} to problem \eqref{eq:KL-optimization}, i.e., for any fixed $\lambda$, as $n \rightarrow \infty$, it converges to $P^*_{\lambda}$.

\section{Guarantees for Soft Best-of-$n$ Sampling}
\label{sec:theory_KL}
This section aims to characterize the convergence rate (in KL-divergence) of $P_{n,\lambda} \to P^*_\lambda$ with $n$. Our main result shows that the KL divergence between these distributions decreases at a rate of $O(1/n)$. Naturally, convergence will depend on the value of $\lambda$: if $\lambda \ll 1$, then $n$ has to be large for $P_{n,\lambda} \approx P^*_\lambda$. To see why, as $\lambda \to 0$, $P_\lambda^*$ converges to the distribution where the output $x^*$ corresponding to the highest reward $r(x^*)$ has probability 1. If we aim to approximate this optimal distribution via $P_{n,\lambda}$, then $n$ has to be sufficiently large such that the output $x^*$ will be sampled with high probability, so $n> 1/P(x^*)$.
\begin{restatable}{lem}{LemSoftBon} \label{lem:p-char}
Let  $n \geq 1$, $\lambda > 0$, $r_\lambda(x)= r(x)/\lambda$, and $X_1,..., X_n\overset{i.i.d.}{\sim}P$. For any $x \in \mathcal{X}$, the distribution $P_{n,\lambda}$ of the sample resulting from Soft Best-of-$n$ sampling satisfies
\begin{align}
    P_{n,\lambda}(x) &= P(x)e^{r_\lambda(x)} \times \mathbb{E}\left[\frac{1}{\frac{1}{n}\left(e^{r_\lambda(x)} + \sum_{i=1}^{n-1} e^{r_\lambda(X_i)} \right)} \right] \label{eq:prob-exact}\\
    &\geq P(x)e^{r_\lambda(x)}\times\left(\frac{1}{n}e^{r_\lambda(x)}+\frac{n-1}{n}\mathbb{E}_P\left[e^{r_\lambda(X)} \right]\right)^{-1}. \label{eq:prob-lb}
\end{align}
\end{restatable}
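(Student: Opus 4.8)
The plan is to compute $P_{n,\lambda}(x)=\Pr(Y=x)$ directly from Definition~\ref{defn:n-tilt}, using the exchangeability of $X_1,\dots,X_n$ to collapse a sum over the selected index, and then to obtain \eqref{eq:prob-lb} from \eqref{eq:prob-exact} by a single application of Jensen's inequality.

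First I would expand over which index is returned: by the law of total probability,
\begin{equation*}
P_{n,\lambda}(x)=\Pr(Y=x)=\sum_{i=1}^{n}\Pr\!\left(X_i=x,\ Z=i\right).
\end{equation*}
Since $(X_1,\dots,X_n)$ is i.i.d.\ and the selection probabilities \eqref{eq:n_tilt_def} are invariant under permuting the indices, every summand equals $\Pr(X_1=x,\ Z=1)$, so $P_{n,\lambda}(x)=n\,\Pr(X_1=x,\ Z=1)$. Writing $\{Z=1\}$ via \eqref{eq:n_tilt_def} and conditioning on $\{X_1=x\}$ (the case $P(x)=0$ is trivial, as then $Y\neq x$ almost surely and both sides of \eqref{eq:prob-exact} vanish), and noting that conditionally $X_2,\dots,X_n$ remain i.i.d.\ $\sim P$, I would get
\begin{equation*}
\Pr(X_1=x,\ Z=1)=P(x)\,\mathbb{E}\!\left[\frac{e^{r_\lambda(x)}}{e^{r_\lambda(x)}+\sum_{j=2}^{n}e^{r_\lambda(X_j)}}\right].
\end{equation*}
Relabeling the $n-1$ variables $X_2,\dots,X_n$ as $X_1,\dots,X_{n-1}$, pulling $e^{r_\lambda(x)}$ out of the fraction, multiplying by $n$, and absorbing that factor into the denominator as $\tfrac1n$ yields exactly \eqref{eq:prob-exact}.

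For \eqref{eq:prob-lb}, set $T\triangleq \tfrac1n\bigl(e^{r_\lambda(x)}+\sum_{i=1}^{n-1}e^{r_\lambda(X_i)}\bigr)>0$. Since $t\mapsto 1/t$ is convex on $(0,\infty)$, Jensen's inequality gives $\mathbb{E}[1/T]\geq 1/\mathbb{E}[T]$, and by linearity of expectation with $X_i\overset{i.i.d.}{\sim}P$ we have $\mathbb{E}[T]=\tfrac1n e^{r_\lambda(x)}+\tfrac{n-1}{n}\mathbb{E}_P[e^{r_\lambda(X)}]$. Substituting into \eqref{eq:prob-exact} gives the stated lower bound.

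There is no substantial obstacle: the only steps needing care are the exchangeability argument that reduces the index sum to the single term $i=1$, and the conditioning on $\{X_1=x\}$ (which must be stated so as to cover $P(x)=0$, albeit trivially). The remainder is linearity of expectation and one invocation of Jensen.
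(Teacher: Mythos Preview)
Your proposal is correct and follows essentially the same route as the paper: expand $\Pr(Y=x)$ over the selected index, use exchangeability to reduce to $n\Pr(X_1=x,\,Z=1)$, condition on $\{X_1=x\}$ to obtain the expectation in \eqref{eq:prob-exact}, and then apply Jensen's inequality to $t\mapsto 1/t$ for \eqref{eq:prob-lb}. Your handling of the $P(x)=0$ case is a nice extra bit of care the paper omits.
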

\begin{proof}[Proof sketch of Lemma \ref{lem:p-char}]
The proof exploits symmetry in the definition of $P_{n,\lambda}(x)$. This symmetry allows to reduce the analysis to the case where $Z=1$ as the chosen sample in Defn. \ref{defn:n-tilt} (up to a scaling factor of $n$). The proof then follows by computing  $\Pr(Z = 1|X_1 = x)\times P_{\lambda}^*(x),$ yielding \eqref{eq:prob-exact}, then applying Jensen's inequality, resulting in the lower bound \eqref{eq:prob-lb}.
\end{proof}
\vspace{-10pt}
\begin{restatable}{thm}{ThmKLUpper}\label{thm:KL-bound}
For $\lambda > 0$ and integer $n \geq 1$,
\begin{equation}
\KL\left(P_\lambda^* \| P_{n,\lambda}\right) \leq \log\left(1 + \frac{1}{n}\mathrm{CV}\left(e^{r_\lambda(X)}\right)^2\right).\label{eq:KL-bound}
\end{equation}
Consequently, $\KL\left(P_\lambda^* \| P_{n,\lambda}\right) = O(1/n)$. Moreover, if $0 \leq r(x) \leq 1$ for all $x \in \mathcal{X}$, then 
\begin{equation}
\label{eq:sinh_bound}
  \KL\left(P^*_\lambda \| P_{n,\lambda}\right) \leq \frac{1}{n}\sinh\left(\frac{1}{2\lambda}\right)^2.  
\end{equation}
\end{restatable}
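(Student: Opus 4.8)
The plan is to combine the pointwise lower bound \eqref{eq:prob-lb} of Lemma~\ref{lem:p-char} with Jensen's inequality. Write $\mu \triangleq \mathbb{E}_P[e^{r_\lambda(X)}]$, so that $P_\lambda^*(x) = P(x)e^{r_\lambda(x)}/\mu$. Dividing the expression for $P_\lambda^*(x)$ by \eqref{eq:prob-lb}, the common factor $P(x)e^{r_\lambda(x)}$ cancels and one is left with the clean pointwise ratio bound
\[
\frac{P_\lambda^*(x)}{P_{n,\lambda}(x)} \;\le\; \frac{\tfrac1n e^{r_\lambda(x)} + \tfrac{n-1}{n}\mu}{\mu} \;=\; 1 + \frac1n\!\left(\frac{e^{r_\lambda(x)}}{\mu} - 1\right),
\]
valid for every $x$ with $P(x)>0$, and such $x$ are exactly the support of $P_\lambda^*$. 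Note that the lower bound on $P_{n,\lambda}$ is exactly the ingredient that controls $\KL(P_\lambda^*\|P_{n,\lambda})$, since this divergence averages $\log(P_\lambda^*/P_{n,\lambda})$ against $P_\lambda^*$.

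Next I would expand $\KL(P_\lambda^*\|P_{n,\lambda}) = \mathbb{E}_{P_\lambda^*}\!\big[\log\tfrac{P_\lambda^*(X)}{P_{n,\lambda}(X)}\big]$, substitute the ratio bound inside the $\log$, and apply Jensen's inequality using concavity of $\log$ to pull the expectation inside, obtaining
\[
\KL(P_\lambda^*\|P_{n,\lambda}) \;\le\; \log\!\left(1 + \frac1n\!\left(\mathbb{E}_{P_\lambda^*}\!\left[\frac{e^{r_\lambda(X)}}{\mu}\right] - 1\right)\right).
\]
It then remains to evaluate the inner expectation by a change of measure: since $P_\lambda^*(x) = P(x)e^{r_\lambda(x)}/\mu$, we get $\mathbb{E}_{P_\lambda^*}[e^{r_\lambda(X)}] = \mathbb{E}_P[e^{2 r_\lambda(X)}]/\mu$, hence $\mathbb{E}_{P_\lambda^*}[e^{r_\lambda(X)}/\mu] - 1 = \mathbb{E}_P[e^{2r_\lambda(X)}]/\mu^2 - 1 = \mathrm{Var}_P(e^{r_\lambda(X)})/\mu^2 = \mathrm{CV}(e^{r_\lambda(X)})^2$. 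This is precisely \eqref{eq:KL-bound}, and the asymptotic claim follows from $\log(1+t)\le t$, which also yields the slightly cruder but often more convenient bound $\KL(P_\lambda^*\|P_{n,\lambda}) \le \tfrac1n\,\mathrm{CV}(e^{r_\lambda(X)})^2$.

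For the last inequality, assume $0\le r(x)\le 1$, so that $Z\triangleq e^{r_\lambda(X)}$ takes values in $[1,e^{1/\lambda}]$, and bound its squared coefficient of variation in a distribution-free way. The Bhatia--Davis inequality gives $\mathrm{Var}_P(Z) \le (e^{1/\lambda}-\mu)(\mu-1)$, so $\mathrm{CV}(Z)^2 \le (e^{1/\lambda}-\mu)(\mu-1)/\mu^2$; maximizing the right-hand side over $\mu\in[1,e^{1/\lambda}]$ (the maximizer is the harmonic mean $2e^{1/\lambda}/(1+e^{1/\lambda})$ of the endpoints) gives $\mathrm{CV}(Z)^2 \le (e^{1/\lambda}-1)^2/(4e^{1/\lambda})$. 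Setting $t=1/(2\lambda)$, a one-line identity shows $(e^{1/\lambda}-1)^2/(4e^{1/\lambda}) = \big((e^{t}-e^{-t})/2\big)^2 = \sinh^2(1/(2\lambda))$; combined with $\KL \le \tfrac1n\mathrm{CV}(Z)^2$ this gives \eqref{eq:sinh_bound}.

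The only step that requires real care is this last one: the crude Popoviciu bound $\mathrm{Var}(Z)\le (e^{1/\lambda}-1)^2/4$ would produce $\tfrac1n\big((e^{1/\lambda}-1)/2\big)^2$, which is strictly — and for small $\lambda$ exponentially — weaker than the $\sinh^2$ bound. Obtaining the stated constant therefore hinges on using the mean-aware Bhatia--Davis estimate together with the optimization over the a priori unknown mean $\mu$. The rest of the argument, namely the cancellation of the tilting factors, the single application of Jensen, and the change of measure, is entirely mechanical.
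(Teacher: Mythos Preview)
Your proof is correct and follows essentially the same approach as the paper: apply the lower bound \eqref{eq:prob-lb} from Lemma~\ref{lem:p-char} to control the pointwise ratio, use Jensen's inequality for the concave $\log$, and then for the bounded-reward case invoke Bhatia--Davis and optimize over the unknown mean $\mu$ to reach the $\sinh^2$ constant. Your write-up is slightly more explicit (noting the maximizer is the harmonic mean of the endpoints and contrasting with the weaker Popoviciu bound), but the argument is the same.
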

\begin{proof}[Proof sketch of Theorem \ref{thm:KL-bound}] The key insight is to leverage the explicit form of $P_{n,\lambda}$ derived in Lemma \ref{lem:p-char}. We can thus bound the KL divergence
    \begin{align*}
         &\KL\left(P^*_\lambda \| P_{n,\lambda}\right) = \sum_x P^*_\lambda(x) \log \left(\frac{ P^*_\lambda(x) }{ P_{n,\lambda}(x)} \right) \\
         &\leq \sum_x \frac{P(x)e^{r_\lambda(x)}}{\mathbb{E}_P[e^{r_\lambda(X)}]} \log\left( \frac{1/\mathbb{E}_P[e^{r_\lambda(X)}]}{1/\left(\frac{1}{n}e^{r_\lambda(x)}+\frac{n-1}{n}\mathbb{E}_P\left[e^{r_\lambda(X)} \right]\right)} \right)
    \end{align*}
Using Jensen's inequality, we obtain:
\[\KL(P^*_\lambda \| P_{n,\lambda}) \leq \log\left(1 + \frac{1}{n}\text{CV}\left(e^{r_\lambda(X)}\right)^2\right)\]
where $\text{CV}$ denotes the coefficient of variation. When $0 \leq r(x) \leq 1$, we can further bound the coefficient of variation using properties of exponential functions and the Bhatia-Davis inequality \cite{bhatia2000better} to obtain the bound $\mathsf{CV}\left(e^{r_\lambda(X)}\right)^2\leq  \frac{(e^{1/\lambda}-1)^2}{4e^{1/\lambda}}$, leading to \eqref{eq:sinh_bound}.
\end{proof}

The bounds in Theorem \ref{thm:KL-bound} can be inverted to derive a sufficient condition  $\lambda$ to guarantee that $P_{n,\lambda}$ is ``close'' (in KL-divergence) to the optimal tilted distribution $P_\lambda^*$.
\begin{restatable}{cor}{CorLambda}\label{cor:lambda_lower_bound}
For $0 \leq r(x) \leq 1$ and fixed $n \geq 1$ and $\epsilon > 0$, if $ \lambda \geq \frac{1}{\log(1 + 4n\epsilon)},$ then $\KL\left(P^*_\lambda \| P_{n,\lambda}\right) \leq \epsilon$.
\end{restatable}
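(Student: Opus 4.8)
The plan is to simply invert the bound \eqref{eq:sinh_bound} of Theorem \ref{thm:KL-bound}, which is available here because the hypothesis $0 \le r(x) \le 1$ is exactly the assumption required for that bound. So from the outset we have
\[
\KL\!\left(P_\lambda^* \,\|\, P_{n,\lambda}\right) \le \frac{1}{n}\sinh\!\left(\frac{1}{2\lambda}\right)^2 .
\]

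First I would rewrite the hyperbolic sine in a form that exposes $e^{1/\lambda}$: using $\sinh(t)^2 = \tfrac14\left(e^{t} - 2 + e^{-t}\right)$ with $t = 1/(2\lambda)$ gives $\sinh(1/(2\lambda))^2 = \tfrac14\left(e^{1/\lambda} - 2 + e^{-1/\lambda}\right)$. Since $\lambda > 0$ forces $0 < e^{-1/\lambda} < 1$, the bracket is at most $e^{1/\lambda} - 1$, and hence
\[
\KL\!\left(P_\lambda^* \,\|\, P_{n,\lambda}\right) \le \frac{e^{1/\lambda} - 1}{4n}.
\]
Next I would invoke the hypothesis $\lambda \ge \frac{1}{\log(1+4n\epsilon)}$. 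Because $n \ge 1$ and $\epsilon > 0$ we have $\log(1+4n\epsilon) > 0$, so taking reciprocals preserves the inequality and yields $1/\lambda \le \log(1+4n\epsilon)$; exponentiating (monotonicity of $\exp$) gives $e^{1/\lambda} \le 1 + 4n\epsilon$. Substituting into the last display,
\[
\KL\!\left(P_\lambda^* \,\|\, P_{n,\lambda}\right) \le \frac{(1+4n\epsilon) - 1}{4n} = \epsilon ,
\]
which is the claim.

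There is essentially no genuine obstacle beyond bookkeeping; the only points that need a moment's care are (i) loosening the $\sinh$ expansion in the correct direction — replacing $e^{-1/\lambda} - 2$ by $-1$ from above, not below — and (ii) checking that $\log(1+4n\epsilon) > 0$ so that dividing the hypothesis through by it is legitimate. As an alternative route one could instead start from \eqref{eq:KL-bound} together with the coefficient-of-variation estimate $\mathsf{CV}(e^{r_\lambda(X)})^2 \le (e^{1/\lambda}-1)^2/(4e^{1/\lambda}) \le (e^{1/\lambda}-1)/4$ used in the proof of Theorem \ref{thm:KL-bound}, and then apply $\log(1+x) \le x$; this reaches the same conclusion and makes transparent that the constant $4$ in the stated threshold on $\lambda$ is precisely the one coming from the Bhatia--Davis bound on the coefficient of variation.
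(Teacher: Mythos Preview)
Your proof is correct. Both you and the paper start from the bound \eqref{eq:sinh_bound} of Theorem~\ref{thm:KL-bound} and invert it, but the routes diverge after that. You first weaken $\sinh(1/(2\lambda))^2 = \tfrac14(e^{1/\lambda}-2+e^{-1/\lambda})$ to $\tfrac14(e^{1/\lambda}-1)$ and then observe that the hypothesis on $\lambda$ is literally $e^{1/\lambda}-1 \le 4n\epsilon$, so the conclusion is immediate. The paper instead keeps $\sinh$ intact, inverts via $\arcsinh$ to obtain the exact threshold $\lambda \ge \bigl(2\arcsinh(\sqrt{n\epsilon})\bigr)^{-1}$, and then has to prove the comparison $2\arcsinh(\sqrt{n\epsilon}) \ge \log(1+4n\epsilon)$ by squaring out $\bigl(\sqrt{n\epsilon}+\sqrt{n\epsilon+1}\bigr)^2 \ge 1+4n\epsilon$. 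Your argument is shorter and more transparent about where the constant $4$ comes from; the paper's detour through $\arcsinh$ has the minor advantage of exhibiting the sharp threshold $\lambda \ge \bigl(2\arcsinh(\sqrt{n\epsilon})\bigr)^{-1}$ before relaxing it to the stated form.
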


\begin{figure}[t]
    \centering
    \includegraphics[width=0.78\columnwidth]{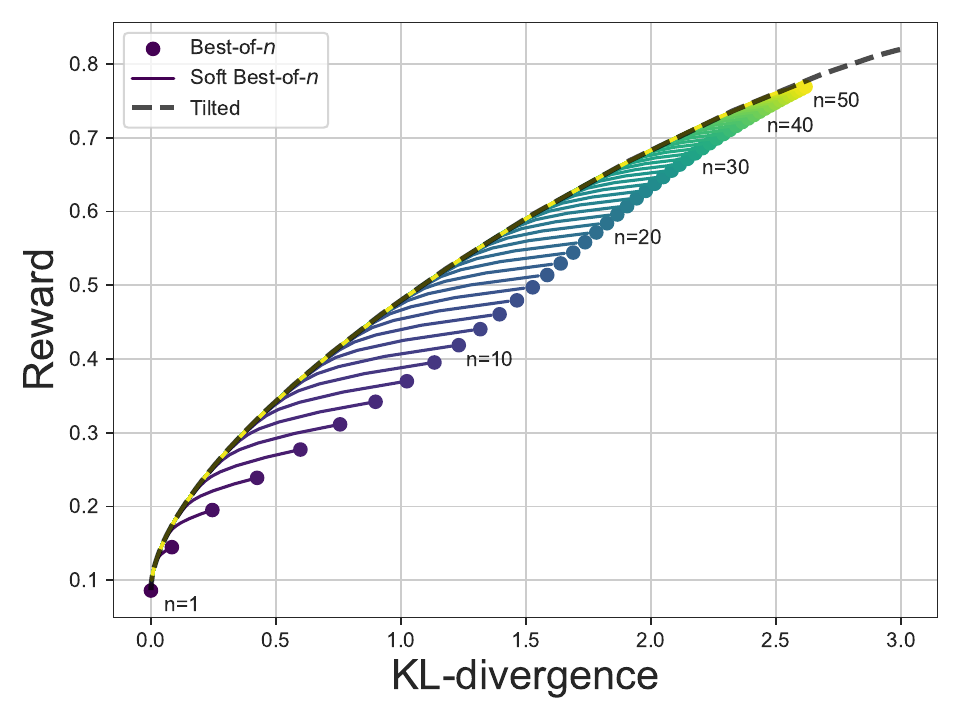}
    \caption{Soft Best-of-$n$ and BoN sampling, compared with the optimal Pareto frontier of exponential tilting for KL-reward tradeoffs. Soft Best-of-$n$ sampling generalizes BoN and allows for control between KL and reward, allowing us to achieve near-optimal performance for all $n$. Alphabet of size 3 with distribution $[0.75, 0.2, 0.05]$ and reward $[0.016, 0.164, 0.820]$.}
    \label{fig:pareto}
    \vspace{-15pt}
\end{figure}

Importantly, Corollary \ref{cor:lambda_lower_bound} demonstrates that Theorem \ref{thm:KL-bound} does not provide a meaningful bound on $\KL\left(P^*_\lambda \| P_{n,\lambda}\right)$ when $\lambda$ is small. Recall that Soft Best-of-$n$ sampling converges to BoN sampling as $\lambda \rightarrow 0$, and it is precisely in this regime where our results do not guarantee near-optimality. For a fixed $n$, guaranteeing that BoN sampling approximates $P_\lambda^*$ is not possible in general. As an example, consider Figure \ref{fig:pareto}. Here, BoN sampling is bounded away from the optimal tilted distribution except when $n$ is large ($n\approx 50$ in the plot) and, consequently, is far from the Pareto-optimal frontier. In contrast,  by simultaneously controlling $\lambda$ and $n$, Soft Best-of-$n$ sampling is able to approximate the optional KL-reward Pareto frontier as long as $\lambda$ is sufficiently large (cf. Corollary \ref{cor:lambda_lower_bound}).

The upper bound in Theorem \ref{thm:KL-bound} is, in fact, sharp, as demonstrated through a binary example. Specifically, consider $\mathcal{X} = \{0,1\}$ with $P(1) = p$ and a reward function $r(x) = \mathds{1}_{\{x=1\}}$. We can explicitly compute $\KL\left(P_\lambda^* \| P_{n,\lambda}\right) =  \frac{1}{n} \frac{p(1-p)(e^{1/\lambda}-1)^2}{(pe^{1/\lambda}+(1-p))^2}$, matching the $O(1/n)$ rate of Theorem \ref{thm:KL-bound}\footnote{\vspace{-10pt}See Appendix \ref{sec:sequence_upperbound} further analysis of this construction.}. 

Next, we show that the $O(1/n)$ decay rate in KL-divergence is tight. To show this result, we focus on a lower bound for $\TV(P^*_\lambda \| P_{n,\lambda})$. The next theorem establishes a lower bound of $O(1/\sqrt{n})$ on the convergence rate of $\TV(P^*_\lambda \| P_{n,\lambda})$ which, via Pinsker's inequality, gives an $O(1/n)$ lower bound on $\KL(P^*_\lambda \| P_{n,\lambda}),$ though the TV convergence rate is of independent interest. 

\begin{restatable}{thm}{ThmKLLower}\label{thm:KL-lowerbound}
For $\lambda > 0$ and integer $n \geq 1$, the following bounds hold on the TV distance:
\begin{equation}
\TV(P^*_\lambda \| P_{n,\lambda}) \geq  \frac{C'}{\sqrt{n-1}},
\end{equation}
where $C'$ is a constant depending on $\lambda$, the reward function $r$, and the second and third moments of $e^{r(X)/\lambda}$. 
\end{restatable}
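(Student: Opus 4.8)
The plan is to exploit the exact formula for $P_{n,\lambda}$ from Lemma \ref{lem:p-char} and reduce the total variation distance to a single scalar random variable, then invoke a Berry--Esseen-type argument. Writing $W_i = e^{r_\lambda(X_i)}$, $\mu = \mathbb{E}_P[W]$, and $W = e^{r_\lambda(x)}$ for the fixed point $x$, equation \eqref{eq:prob-exact} gives $P_{n,\lambda}(x) = P(x) W \cdot \mathbb{E}[n/(W + S_{n-1})]$ where $S_{n-1} = \sum_{i=1}^{n-1} W_i$, while $P^*_\lambda(x) = P(x) W / \mu$. Hence the pointwise ratio is $P_{n,\lambda}(x)/P^*_\lambda(x) = \mu \cdot \mathbb{E}[n/(W + S_{n-1})]$, and the TV distance is $\tfrac12 \sum_x |P^*_\lambda(x) - P_{n,\lambda}(x)| = \tfrac12 \sum_x P^*_\lambda(x)\,\bigl| 1 - \mu\,\mathbb{E}[n/(W+S_{n-1})] \bigr|$.

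The key step is to get a sharp \emph{lower} bound on a suitable $\bigl| 1 - \mu\,\mathbb{E}[n/(W+S_{n-1})] \bigr|$. I would Taylor-expand $1/(W+S_{n-1})$ around its mean: with $\bar S_{n-1} = S_{n-1}/(n-1)$, one has $\mathbb{E}\bigl[\tfrac{n}{W + S_{n-1}}\bigr] = \tfrac{1}{\mu}\bigl(1 + \tfrac{c_1(x,\lambda)}{n} + o(1/n)\bigr)$ where the $1/n$ correction involves $\mathrm{Var}(W)$ and $\mathbb{E}[W] - W$ (the latter is why a generic $x$ with $W \neq \mu$ must be singled out). The subtlety is that the naive expansion only controls $\mathbb{E}[n/(W+S_{n-1})]$ to order $1/n$ in \emph{absolute} terms, which would give a TV lower bound of order $1/n$, not $1/\sqrt{n}$. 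To recover the $1/\sqrt{n}$ rate one must instead argue that the \emph{fluctuations} of $W + S_{n-1}$ around $n\mu$ are of order $\sqrt{n}$, and that the map $s \mapsto 1/s$ is strictly convex, so that by a reverse-Jensen / anti-concentration argument the contribution of the event $\{|S_{n-1} - (n-1)\mu| \gtrsim \sqrt{n}\}$ — which has probability bounded below by a constant via the central limit theorem / Berry--Esseen with the third moment of $W$ — forces $\bigl| P^*_\lambda(x) - P_{n,\lambda}(x)\bigr|$ to be $\Omega(1/\sqrt{n})$ for at least one $x$. Concretely, I would split $\mathbb{E}[n/(W+S_{n-1})] - 1/\mu$ into the contribution from the upper-deviation and lower-deviation halves of $S_{n-1}$ and show the convexity makes these fail to cancel by an amount $\Omega(1/\sqrt n)$, with $C'$ emerging from $\mathrm{CV}(W)$ and the Berry--Esseen constant governing the third moment $\mathbb{E}[|W-\mu|^3]$.

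The main obstacle is precisely this anti-concentration / non-cancellation estimate: proving that the order-$1/\sqrt n$ fluctuations do not average out to something smaller when passed through $1/s$. A clean way to handle it is to fix a small constant $\delta > 0$, restrict to the event $A = \{ S_{n-1} - (n-1)\mu \in [\delta\sqrt{n-1}\,\sigma_W,\, 2\delta\sqrt{n-1}\,\sigma_W]\}$ (where $\sigma_W^2 = \mathrm{Var}(W)$), which has probability at least some $c(\delta) > 0$ uniformly in $n$ by Berry--Esseen, and on this event lower-bound $\bigl|\tfrac{n}{W+S_{n-1}} - \tfrac1\mu\bigr|$ by a quantity of order $\delta/\sqrt{n}$; combining with the complementary event (which moves the mean in the same direction by convexity, so cannot cause cancellation) yields $\bigl|\mathbb{E}[\tfrac{n}{W+S_{n-1}}] - \tfrac1\mu\bigr| \geq c'(\delta,\lambda,r)/\sqrt{n}$. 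Multiplying by $P^*_\lambda(x)\mu$ and summing over $x$ (keeping only the single best term) gives the claimed bound with $C'$ depending on $\lambda$, $r$, and the second and third moments of $e^{r(X)/\lambda}$. I would also need to check a non-degeneracy condition — if $r$ is constant then $P_{n,\lambda} = P^*_\lambda$ exactly and both sides vanish — so the statement implicitly assumes $\sigma_W > 0$, which I would note explicitly.
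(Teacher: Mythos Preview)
Your reduction of $\mathsf{TV}$ to $\tfrac12\sum_x P^*_\lambda(x)\,\bigl|1-\mu\,\mathbb{E}[n/(W+S_{n-1})]\bigr|$ matches the paper's starting point, as does the instinct to bring in Berry--Esseen on $S_{n-1}$. The gap is in your non-cancellation step. You argue that on an upper-deviation event $A=\{S_{n-1}-(n-1)\mu\in[\delta\sigma_W\sqrt{n-1},\,2\delta\sigma_W\sqrt{n-1}]\}$ the integrand $n/(W+S_{n-1})-1/\mu$ has magnitude $\Theta(1/\sqrt{n})$, and that ``the complementary event moves the mean in the same direction by convexity, so cannot cause cancellation.'' This is false: on the lower-deviation half the integrand has the \emph{opposite} sign, and since $\mathbb{E}[S_{n-1}-(n-1)\mu]=0$ the linear contributions cancel exactly. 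Convexity only fixes the sign of the second-order residual---the Jensen gap $\tfrac12 f''(n\mu)\,\mathrm{Var}(S_{n-1})\asymp n^{-2}\cdot n=O(1/n)$---not the $O(1/\sqrt{n})$ fluctuation scale. In other words, your own Taylor computation in the second paragraph already identified the true order: $\bigl|1-L_n(x)\bigr|$ is $\Theta(1/n)$, and the anti-concentration/convexity heuristic cannot upgrade it to $1/\sqrt{n}$. (A direct calculation in the paper's binary example confirms $|P^*_\lambda(x)-P_{n,\lambda}(x)|=\Theta(1/n)$ pointwise.)

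The paper proceeds differently: it writes $\mathbb{E}[1/(c+\bar S)]$ (with $\bar S$ the sample mean of the $e^{r_\lambda(X_i)}/\mu$ and $c=O(1/n)$) as the tail integral $\int\Pr(\bar S<1/t-c)\,dt$, splits at the threshold $t=1/(1-\delta+c)$ for a free parameter $\delta$, bounds the CDF pointwise via Berry--Esseen, and then tunes $\delta\asymp 1/\sqrt{n-1}$. So the mechanism is a one-sided CDF bound rather than your two-sided event decomposition. That said, the same sign issue lurks there: the dominant term after the split is $\tfrac{1}{1-\delta+c}$, which is \emph{larger} than $1$ when $\delta>c$, so the chain of inequalities delivers $L_n(x)\le 1+O(1/\sqrt{n})$, not $L_n(x)\le 1-c''/\sqrt{n}$. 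Extracting a genuine lower bound on $\bigl|1-L_n(x)\bigr|$ at the $1/\sqrt{n}$ scale requires more than what either route, as sketched, supplies.
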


\begin{proof}[Proof sketch of Theorem \ref{thm:KL-lowerbound}] 
We bound the total variation distance between $P^*_\lambda$ and $P_{n,\lambda}$ by considering large enough $n$ and applying the Berry-Esseen theorem \cite{berry1941accuracy, esseen1942liapounoff}. Let $\mu = \mathbb{E}_P[e^{r(X)}].$ By carefully examining the selection probability in step 2 of Definition 2, we can show that for any $\delta < 1-\frac{1}{\mu}$:
\begin{align*}
    \TV(P^*_\lambda \|P_{n,\lambda}) &\geq \frac{1}{2}\sum_x \frac{P(x)e^{r(x)}}\mu |1 - L_n(x)| \\
    &\geq \frac{\delta}{4} - \frac{C'}{\sqrt{n-1}} - O\left(\frac{1}{n^{3/2}}\right)
\end{align*}
where $L_n(x)$ captures the sampling dynamics and $C'$ is a constant depending on the reward properties. The result follows by optimizing the choice of $\delta$. By applying Pinsker's inequality \cite{cover1999elements}, we derive the bound $\KL(P^*_\lambda \|P_{n,\lambda}) \geq \Omega (1/n)$.
\end{proof}

Next, we derive a bound on the convergence of the expected reward under $P_{n,\lambda}$ relative to $P_\lambda^*$, assuming that $0\leq r(x)\leq 1$. One strategy is to leverage the result of Theorem \ref{thm:KL-bound} and directly apply (yet again) Pinsker's inequality, which would yield an upper-bound on total variation and, consequently, on the expectation of any bounded reward function $r(X)$. However, this route renders a convergence rate of $O(1/\sqrt{n})$ in reward. The next bound proves a \emph{relative} convergence rate of $O(1/n)$ in expected reward, albeit at the cost of additional constants. We focus on bounding only one side of the reward, i.e., how much greater the optimal reward under $P_\lambda^*$ is relative to $ P_{n,\lambda}$.

\begin{restatable}{thm}{ThmRewardBound}\label{thm:relative_bound}
    Let $0\leq r(x) \leq 1$ and $M(\lambda)\defined (e^{1/\lambda}/\mathbb{E}_P[e^{r_\lambda(X)}])-1.$ Then 
    \begin{equation}
        \frac{\mathbb{E}_{P_\lambda^*}[r_\lambda(X)] - \mathbb{E}_{P_{n,\lambda}}[r_\lambda(X)]}{\mathbb{E}_{P_\lambda^*}[r_\lambda(X)]} \leq \frac{M(\lambda)}{M(\lambda)+n}.
    \end{equation}
    In particular, the reward achieved by Soft Best-of-$n$ sampling converges to that of the tilted distribution with relative error $O(1/n).$
\end{restatable}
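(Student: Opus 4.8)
The plan is to plug the pointwise lower bound on $P_{n,\lambda}$ from Lemma~\ref{lem:p-char} directly into the reward deficit
\[
\mathbb{E}_{P_\lambda^*}[r_\lambda(X)] - \mathbb{E}_{P_{n,\lambda}}[r_\lambda(X)] = \sum_{x} \bigl(P_\lambda^*(x) - P_{n,\lambda}(x)\bigr)\, r_\lambda(x),
\]
and to control it term by term, using crucially that $r_\lambda(x) \ge 0$. Write $\mu \defined \mathbb{E}_P[e^{r_\lambda(X)}]$ and $w(x) \defined e^{r_\lambda(x)}/\mu$, so that $P_\lambda^*(x) = P(x)e^{r_\lambda(x)}/\mu$, $\mathbb{E}_P[w(X)] = 1$, and $w(x) > 0$. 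Dividing the bound \eqref{eq:prob-lb} by $\mu$ rewrites it as
\[
P_{n,\lambda}(x) \ge P_\lambda^*(x)\cdot \frac{n}{\,n + w(x) - 1\,},
\]
whose denominator is positive since $n + w(x) - 1 = (n-1) + w(x) > 0$. Rearranging, for every $x$,
\[
P_\lambda^*(x) - P_{n,\lambda}(x) \;\le\; P_\lambda^*(x)\Bigl(1 - \tfrac{n}{n+w(x)-1}\Bigr) \;=\; P_\lambda^*(x)\,\frac{w(x)-1}{n+w(x)-1}.
\]

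Next I would multiply this inequality by $r_\lambda(x) \ge 0$ and sum over $x$; this is legitimate even though $P_\lambda^*(x) - P_{n,\lambda}(x)$ need not have a fixed sign, because the displayed inequality holds for each $x$ individually. This gives
\[
\mathbb{E}_{P_\lambda^*}[r_\lambda(X)] - \mathbb{E}_{P_{n,\lambda}}[r_\lambda(X)] \;\le\; \sum_{x} P_\lambda^*(x)\, r_\lambda(x)\, \frac{w(x)-1}{n+w(x)-1}.
\]
Since $w \mapsto \frac{w-1}{n+w-1} = 1 - \frac{n}{n+w-1}$ is increasing in $w$ (the denominator is positive and increasing), it suffices to bound $\max_x w(x)$. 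The hypothesis $0 \le r(x) \le 1$ gives $1 \le e^{r_\lambda(x)} \le e^{1/\lambda}$, hence $\mu \le e^{1/\lambda}$, so $M(\lambda) = e^{1/\lambda}/\mu - 1 \ge 0$ and $w(x) - 1 \le e^{1/\lambda}/\mu - 1 = M(\lambda)$. Therefore $\frac{w(x)-1}{n+w(x)-1} \le \frac{M(\lambda)}{n+M(\lambda)}$ for all $x$.

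Substituting this uniform bound and factoring the constant out of the sum yields
\[
\mathbb{E}_{P_\lambda^*}[r_\lambda(X)] - \mathbb{E}_{P_{n,\lambda}}[r_\lambda(X)] \;\le\; \frac{M(\lambda)}{n+M(\lambda)}\sum_{x} P_\lambda^*(x)\, r_\lambda(x) \;=\; \frac{M(\lambda)}{n+M(\lambda)}\,\mathbb{E}_{P_\lambda^*}[r_\lambda(X)],
\]
and dividing by $\mathbb{E}_{P_\lambda^*}[r_\lambda(X)] > 0$ (the degenerate case $r \equiv 0$ being trivial) gives the claimed bound. The proof is short once the algebra is arranged correctly; the main subtlety — and the reason the statement is one-sided — is that the lower bound on $P_{n,\lambda}(x)$ only constrains the direction in which $P_\lambda^*$ over-rewards relative to $P_{n,\lambda}$, so this route cannot bound $\bigl|\mathbb{E}_{P_\lambda^*}[r_\lambda(X)] - \mathbb{E}_{P_{n,\lambda}}[r_\lambda(X)]\bigr|$; a genuinely two-sided $O(1/n)$ reward guarantee would need a matching pointwise upper bound on $P_{n,\lambda}(x)$ (or a different argument).
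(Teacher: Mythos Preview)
The proposal is correct and follows essentially the same route as the paper: apply the pointwise lower bound \eqref{eq:prob-lb} to get $P_\lambda^*(x)-P_{n,\lambda}(x)\le P_\lambda^*(x)\,\frac{w(x)-1}{n+w(x)-1}$, multiply by the nonnegative weight $r_\lambda(x)$, sum, and replace $w(x)$ by its uniform bound $M(\lambda)+1$. Your write-up is slightly more explicit about the monotonicity step and about why the argument is inherently one-sided, but the underlying mechanism is identical to the paper's proof.
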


\begin{proof}[Proof of Theorem \ref{thm:relative_bound}]
 For $P_\lambda^*$ given in \eqref{eq:Pstar-def},  $ P_{n,\lambda}$ given in \eqref{eq:n_tilt_def}, $r(x)\geq 0$, and applying \eqref{eq:prob-lb}:
     \begin{align*}
         \mathbb{E}_{P_\lambda^*}[r_\lambda(X)] - \mathbb{E}_{P_{n,\lambda}}[r_\lambda(X)] \\
         \leq \sum_x \frac{P(x)e^{r_\lambda}(x)r_\lambda(x)}{\mathbb{E}_{P}[e^{r_\lambda(X)}]}\left(1 - \frac{\mathbb{E}_{P}[e^{r_\lambda(X)}] }{\frac{1}{n}e^{r_\lambda(x)}+ \frac{n-1}{n} E_P[e^{r_\lambda(X)}]} \right) 
     \end{align*}

\noindent Since $0 \leq r(x) \leq 1$, we have $e^{r_\lambda(x)} \leq e^{1/\lambda}$ for all $x$ and:
\begin{equation}
\frac{e^{r_\lambda(x)}}{\mathbb{E}_P[e^{r_\lambda(X)}]} \leq \frac{e^{1/\lambda}}{\mathbb{E}_P[e^{r_\lambda(X)}]} = M(\lambda) + 1
\end{equation}

The relative difference in expected rewards can be bounded:
\[
\frac{\mathbb{E}_{P^*_\lambda}[r_\lambda(X)] - \mathbb{E}_{P_{n,\lambda}}[r_\lambda(X)]}{\mathbb{E}_{P^*_\lambda}[r_\lambda(X)]} \leq \frac{M(\lambda)}{M(\lambda) + n}
\]

As $\lambda$ increases, both the numerator $e^{1/\lambda}$ and $\mathbb{E}_P[e^{r(X)/\lambda}]$ in $M(\lambda)$ decrease and approach 1, causing $M(\lambda)\rightarrow0$. This leads to faster convergence in the relative error bound, albeit at the cost of reduced reward discrimination between sequences.
\end{proof}

This improved convergence comes at a cost: larger values of $\lambda$ also reduce the effective reward difference between sequences (since we are dividing rewards by $\lambda$), making the sampling distribution closer to the original distribution $P$. This creates a fundamental trade-off: smaller $\lambda$ values allow for more aggressive optimization of the reward but require more samples for convergence (as $M(\lambda)$ is large), while larger $\lambda$ values provide faster convergence but result in more conservative optimization (as $M(\lambda)$ is small). 
Our results highlight that there is a delicate balance between $\lambda$ and $n$ in order to ensure that $P_{n,\lambda}$ approximates $P_\lambda^*$. Such a balance is not feasible in general when restricted to BoN sampling.

\section{The Additive Reward Model}
We analyze next a simplified setting in which the reward of a sequence of outcomes is additive (or ``separable'' in rate-distortion parlance). Consider an i.i.d. source $P_X^{\otimes{m}}$ over $\calX^m$ and a reward function $r:\calX\to [0,1]$. The reward of a sequence $x^m=(x_1,\dots,x_m)$ is given by $r(x^m) \triangleq \frac{1}{m}\sum_{i=1}^m r(x_i)$.
Of course, for an LLM, the source will absolutely not be i.i.d., and the reward will almost certainly not be additive. Nevertheless, this model will help us to elucidate the limitations of  best-of-$n$ sampling and  motivate controlled decoding strategies. We consider two different sampling strategies to maximize reward: \emph{symbolwise}, where each symbol is sampled $n$ times and selected based on their respective reward, and \emph{blockwise}, where instead each block of $m$ symbols is sampled $n$ times. 

\noindent \textbf{A. Tilting Symbols vs. Tilting Blocks.} When both $P$ and $r$ are known, the same KL-divergence vs. reward trade-off curve is spanned by either solving \eqref{eq:KL-optimization} at the individual symbol level or for a sequence of $m$ symbols. To see why this is the case, consider blockwise counterpart of \eqref{eq:KL-optimization}, given by the optimization
\begin{equation}
\label{eq:KL-optimization-iid}
\begin{aligned}
\max_{P^{*m}\in\Delta_{\calX^m}}~\mathbb{E}_{P^{*m}}\left[r(X) \right]\
    \mbox{subject to } \KL(P^{*m}\|P^{\otimes m}) \leq \epsilon.
\end{aligned}
\end{equation} 
Naturally, the optimal solution $P^{*m}$ is of the form
\begin{align}
    P^{*m}_{\lambda} (x^m) = \frac{P^{\otimes m}(x^m)e^{r(x^m)/\lambda}}{\mathbb{E}_{P^{\otimes m}}[e^{r(X^m)/\lambda}]}
    = \prod_{i=1}^m  \frac{P(x_i)e^{r(x_i)/m\lambda}}{\mathbb{E}_{P}[e^{r(X)/m\lambda}]}
\end{align}
and $P^{*m}_{\lambda}=P^{*\otimes m}_{m\lambda}$.  

The simple derivation above shows that to maximize reward subject to a KL constraint for the additive reward model, we can either (i) sample $m$ symbols i.i.d. from $P_{\lambda'}^*$ or (ii) sample a block of size $m$ from $P^{*m}_{\lambda}$. As long as $\lambda'=m\lambda$, both solutions render exactly the same reward and KL-divergence relative to $P^{\otimes m}$. However, as we show in the next subsections, this is not the case when we approximate this trade-off curve via BoN or Soft Best-of-$n$ sampling! A significant performance gap exists between applying Best-of-$n$ at a symbol or block level. Next, we analyze why block-level BoN sampling becomes inefficient as the block length increases. 

\noindent \textbf{B. Blockwise Best-of-$n$ sampling.} Let $\mathbb{E}_P[r(X)] = \mu$ be the average reward under the base distribution. For a sequence of length $m$, we analyze the probability of finding rewards significantly above this mean through blockwise sampling.

\begin{prop} 
    Let $Y^m$ be a sequence produced via best-of-$n$ sampling of $n$ sequences drawn from $P_X^{\otimes{m}}$. Then, for $\alpha>0$,
    \begin{align}
        \Pr\left(r(Y^m) \geq \mu + \sqrt{\frac{\alpha \log n}{2m}} \right) \\
        \leq 1-\left(1- n^{-\alpha}\right)^n \xrightarrow{n \to \infty} 
        \begin{cases}
            1&\mbox{if } 0<\alpha<1,\\
            1-1/e &\mbox{if } \alpha = 1,\\
            0 &\mbox{if } \alpha>1.
        \end{cases}
    \end{align}
\end{prop}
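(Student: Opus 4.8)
The plan is to observe that blockwise best-of-$n$ returns the block with the largest empirical reward, so the event in question is exactly the event that the \emph{maximum} of $n$ i.i.d.\ block rewards exceeds the stated threshold; I would then control a single block's upper tail by Hoeffding's inequality and feed that estimate into the exact formula for the distribution of a maximum.

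First I would write $Y^m = X^m_Z$ with $Z = \arg\max_{j \leq n} r(X^m_j)$, where $X^m_1, \dots, X^m_n$ are i.i.d.\ from $P_X^{\otimes m}$. Then $\{r(Y^m) \geq t\} = \{\max_j r(X^m_j) \geq t\}$, so by independence $\Pr(r(Y^m) \geq t) = 1 - \big(1 - \Pr(r(X^m) \geq t)\big)^n$. Next, since $r(X^m) = \frac1m\sum_{i=1}^m r(X_i)$ is an average of $m$ i.i.d.\ variables in $[0,1]$ with mean $\mu$, Hoeffding's inequality gives $\Pr(r(X^m) \geq \mu + s) \leq e^{-2ms^2}$ for every $s \geq 0$; choosing $s = \sqrt{\alpha\log n/(2m)}$ makes the exponent exactly $\alpha\log n$, so this tail is at most $n^{-\alpha}$. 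Since $n^{-\alpha} \in (0,1]$ for $n \geq 1$ and $x \mapsto x^n$ is nondecreasing on $[0,1]$, we get $\big(1 - \Pr(r(X^m)\geq\mu+s)\big)^n \geq (1-n^{-\alpha})^n$, and substituting back into the maximum formula with $t = \mu + s$ yields the claimed bound $\Pr\big(r(Y^m) \geq \mu + \sqrt{\alpha\log n/(2m)}\big) \leq 1 - (1-n^{-\alpha})^n$.

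For the limiting behavior I would take logarithms: $n\log(1-n^{-\alpha}) = -n^{1-\alpha} + O(n^{1-2\alpha})$ as $n\to\infty$, using $\log(1-u) = -u + O(u^2)$. This diverges to $-\infty$ when $0<\alpha<1$, so $(1-n^{-\alpha})^n \to 0$ and the bound tends to $1$; it tends to $0$ when $\alpha>1$, so $(1-n^{-\alpha})^n \to 1$ and the bound tends to $0$; and the classical limit $(1-1/n)^n\to e^{-1}$ handles $\alpha=1$, giving the value $1 - 1/e$.

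Honestly there is no serious obstacle here. The only points requiring care are keeping the $\geq$ versus $>$ conventions consistent when converting the best-of-$n$ event into a single-block tail event, the monotonicity of $x\mapsto x^n$ used to push the Hoeffding bound through the $n$-th power, and the elementary but slightly fussy evaluation of the three-way limit (in particular the $\alpha=1$ case). The conceptual content lies entirely in the calibration $s = \sqrt{\alpha\log n/(2m)}$, chosen so that the per-block exceedance probability is precisely $n^{-\alpha}$; this is the point of the example, since it shows that the best of $n$ blocks exceeds $\mu$ only by a margin of order $\sqrt{\log n / m}$, which vanishes once the block length $m$ outgrows $\log n$.
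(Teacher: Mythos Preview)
Your proposal is correct and follows essentially the same route as the paper: express the event as $\{\max_j r(X_j^m)\ge t\}$, bound the single-block tail by Hoeffding, and substitute $t=\mu+\sqrt{\alpha\log n/(2m)}$. You are in fact slightly more thorough, since you spell out the monotonicity step and the three-way limit computation that the paper simply asserts.
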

\begin{proof}
    Let $X_{i,j}$ be drawn i.i.d. from $P_X$ for $i\in [m]$ and $j\in [n]$, and $Z_j \triangleq r(X_{1,j},\dots,X_{m,j})$ be the reward of the $j$-th sequence. Then for $t > 0$
    \begin{align*} \Pr\left(\max_{j\in[n]} Z_j \geq t\right) &= 1 - [\Pr(Z_1 < t)]^n\\
    &\leq 1 - \left[1 - \exp\left(-2m(t-\mu)^2\right)\right]^n,
    \end{align*}
where the inequality follows from Hoeffding's inequality. The result follows from choosing $t=\mu +\sqrt{\frac{\alpha \log n}{2m}}$.
\end{proof}

This result reveals a fundamental limitation of BoN: to achieve a gain of $\epsilon$ over the average reward $\mu$ using best-of-n sampling, we need $n = e^{O(m\epsilon^2)}$ samples. Intuitively, this exponential dependence on sequence length arises because long sequences tend to concentrate around their mean reward (by the law of large numbers), requiring an exponentially large number of samples $n$ in a Best-of-$n$ sampling strategy to achieve a meaningful gain in reward.

In contrast, consider the symbolwise BoN sampling strategy in Definition \ref{defn:best-of-n} applied to a sequence of $m$ symbols: we sample each of the $m$ symbols $n$ times, keeping the ones with the highest reward value. For simplicity, assume that $r:\mathcal{X} \to \{0,1\}$. In this case, using the notation in Definition  \ref{defn:best-of-n}, the expected reward of the $i$-th symbol is
\begin{equation}
    \mathbb{E}[r(Y_i)] = \Pr\left(r(Y_i\right) = 1) =  1-(1-\mu)^n\triangleq p_n.
\end{equation}
Thus, using a Chernoff bound \cite[Thm. 4]{chung2006concentration} for $\delta > 0$,
\begin{equation}
    \Pr\left(\frac{1}{m}\sum_{i=1}^m r(Y_i) \leq p_n-\delta/m\right) \leq e^{-\frac{\delta^2}{2}m p_n}.
\end{equation}
Note that $\mu \leq p_n \leq 1$, so to achieve a reward of $p_n - \delta/m$ with probability $\tau$ only requires $m \geq -2\log(1-\tau)/\delta^2\mu$ which is independent of $n$. For any $\delta$, we can pick an $m$ sufficiently large such that we achieve a reward with probability $\tau$ and that $\delta/m$ is arbitrarily close to 0. Then, to achieve an $\epsilon$ gain over the average reward $\mu$ requires $n \geq \log(1-\mu-\epsilon)/\log(1-\mu)$. Intuitively, if at each symbol we can pick the reward-maximizing option out of a set of samples, we will very quickly achieve or beat $\mu$ over $m$ symbols. However, the amount by which we beat $\mu$ depends on the number of samples $n$, and $\mu$. In this simple reward setting, symbolwise sampling decouples $n$ and $m$, providing a dramatic improvement over the exponential dependence of $n$ on $m$ found in blockwise sampling.

\noindent \textbf{C. Blockwise Soft Best-of-$n$.} We now analyze the performance of Soft Best-of-$n$ in the blockwise additive reward model.

\begin{defn}\label{defn:n-tilt-blockwise}
For a fixed integer $k \geq 1$ and $\lambda > 0$, the \emph{blockwise Soft Best-of-$n$ sampling} strategy consists of:
\begin{enumerate}
    \item Draw $X^m_1, \ldots, X^m_n$ i.i.d.\ from $P^{\otimes m}_X$;
    \item Draw $Z$ from $\{1, \ldots, n\}$ with distribution
   \begin{equation}
        \Pr(Z=i) = \frac{e^{r(X^m_i)/\lambda}}{\sum_{j=1}^n e^{r(X^m_j)/\lambda}};
    \end{equation}
    \item Return $Y = X_Z$.
\end{enumerate}
\end{defn}
\begin{restatable}{thm}{ThmKLAdditive}\label{thm:KL-additive-reward}
For $\lambda > 0$ and integers $n,m \geq 1$, let $P_X^{\otimes m}$ be an i.i.d. source over $\mathcal{X}^m$ and $r: \mathcal{X} \to [0,1]$ be a reward function. Then, for the blockwise Soft Best-of-$n$ sampling strategy:
\begin{equation*}
\KL(P^{*m}_{\lambda}\|P^{m}_{n,\lambda}) \leq \log\left(1 + \frac{1}{n}\left(\frac{\mathbb{E}_{P_X^{\otimes m}}[e^{2r_\lambda(X)/m}]}{\mathbb{E}_{P_X^{\otimes m}}[e^{r_\lambda(X)/m}]^2} - 1\right)^m\right)
\end{equation*}
Moreover, if $0 \leq r(x) \leq 1$ for all $x \in \mathcal{X}$, then $
\KL(P^*_\lambda \| P_{n,\lambda}) \leq \frac{1}{n}\sinh\left(\frac{1}{m\lambda}\right)^{2m}$.
\end{restatable}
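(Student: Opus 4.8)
The plan is to reduce everything to Theorem~\ref{thm:KL-bound} by passing to the ``super-alphabet'' $\calX^m$. The first (and essentially only) observation is that blockwise Soft Best-of-$n$ sampling (Definition~\ref{defn:n-tilt-blockwise}) is literally ordinary Soft Best-of-$n$ sampling (Definition~\ref{defn:n-tilt}) run with alphabet $\calX^m$, reference distribution $P_X^{\otimes m}$, and reward $x^m\mapsto r(x^m)=\frac1m\sum_{i=1}^m r(x_i)$, at temperature $\lambda$; its output distribution is $P^{m}_{n,\lambda}$, and the exponential-tilting solution for this alphabet is exactly $P^{*m}_{\lambda}$. Hence Lemma~\ref{lem:p-char} and the argument of Theorem~\ref{thm:KL-bound} apply verbatim and give
\[ \KL\left(P^{*m}_{\lambda}\,\|\,P^{m}_{n,\lambda}\right)\le\log\left(1+\frac1n\,\mathrm{CV}\left(e^{r(X^m)/\lambda}\right)^{2}\right), \]
so the task collapses to computing the coefficient of variation of $e^{r(X^m)/\lambda}$ under $P_X^{\otimes m}$.

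This is where the additive/i.i.d.\ structure enters and produces the exponent $m$. Since $r(X^m)/\lambda=\frac{1}{m\lambda}\sum_{i=1}^m r(X_i)$, the variable $e^{r(X^m)/\lambda}=\prod_{i=1}^m e^{r(X_i)/(m\lambda)}$ is a product of i.i.d.\ factors, so both of its moments tensorize: $\mathbb{E}_{P_X^{\otimes m}}[e^{r(X^m)/\lambda}]=\mathbb{E}_{P_X}[e^{r(X)/(m\lambda)}]^{m}$ and $\mathbb{E}_{P_X^{\otimes m}}[e^{2r(X^m)/\lambda}]=\mathbb{E}_{P_X}[e^{2r(X)/(m\lambda)}]^{m}$. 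Plugging these into $1+\mathrm{CV}(Y)^2=\mathbb{E}[Y^2]/\mathbb{E}[Y]^2$ rewrites the logarithm's argument purely through the single-symbol moments of $e^{r(X)/(m\lambda)}=e^{r_\lambda(X)/m}$, which yields the first displayed bound of the theorem. For the ``Moreover'' part, assume $0\le r(x)\le 1$, so the per-symbol variable $e^{r(X)/(m\lambda)}$ takes values in $[1,e^{1/(m\lambda)}]$; the same estimate used in the proof of Theorem~\ref{thm:KL-bound} (exponential bounds together with the Bhatia--Davis inequality~\cite{bhatia2000better}), now at temperature $m\lambda$ rather than $\lambda$, controls the per-symbol ratio $\mathbb{E}_{P_X}[e^{2r(X)/(m\lambda)}]/\mathbb{E}_{P_X}[e^{r(X)/(m\lambda)}]^2-1$ by a hyperbolic-sine term in $1/(m\lambda)$. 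Raising this to the $m$-th power via the tensorization identity, using $\log(1+t)\le t$, and tidying up with the standard hyperbolic identities ($\cosh^2=1+\sinh^2$ and the double-angle formula) gives $\KL(P^{*m}_\lambda\,\|\,P^{m}_{n,\lambda})\le\frac1n\sinh(\frac1{m\lambda})^{2m}$.

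Almost all of the mathematical content sits in the reduction of the first paragraph and the moment factorization of the second, and both are short once set up. The step I expect to need the most care is the very last one---extracting the precise clean constant $\sinh(1/(m\lambda))^{2m}$---because the per-symbol bound must be invoked (and then raised to the $m$-th power) in exactly the right form, and a slightly different but equivalent expression involving $\cosh$ and powers of two is what comes out most naturally; everything else (the reduction, the tensorization, and the appeal to Theorem~\ref{thm:KL-bound}) is routine bookkeeping.
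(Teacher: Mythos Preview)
Your approach is exactly the paper's: the appendix proof of Theorem~\ref{thm:KL-additive-reward} replays the Lemma~\ref{lem:p-char}/Theorem~\ref{thm:KL-bound} argument on the block alphabet $\calX^m$ with reference law $P_X^{\otimes m}$ and block reward $r(x^m)=\frac1m\sum_i r(x_i)$, and then uses the i.i.d./additive structure to factorize $\mathbb{E}[e^{r(X^m)/\lambda}]$ and $\mathbb{E}[e^{2r(X^m)/\lambda}]$ into per-symbol moments, precisely as you describe. The care you anticipate in extracting the clean constant $\sinh(1/(m\lambda))^{2m}$ is warranted---the paper also handles this step by bounding the per-symbol ratio via Bhatia--Davis and then raising to the $m$-th power.
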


\begin{proof}[Proof sketch of Theorem \ref{thm:KL-additive-reward}]
The proof follow the same steps of the proof of Theorem \ref{thm:KL-bound} but adapted for sequences of length $m$ drawn i.i.d. from $P_X^{\otimes m}$.
\end{proof}

\vspace{-5pt}
\begin{restatable}{cor}{CorLambdaAdditive}
\label{cor:lambda_additivereward}
For $\epsilon > 0$, if $0 \leq r(x) \leq 1$ and $\lambda \geq \frac{2}{\log(1 + 4(n\epsilon)^{1/m})}$, then $\KL(P^{*m}_{\lambda}\|P^{m}_{n,\lambda}) \leq \epsilon$.
\end{restatable}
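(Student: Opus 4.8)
\textbf{Proof proposal for Corollary \ref{cor:lambda_additivereward}.} The plan is to invert the hyperbolic-sine bound furnished by Theorem \ref{thm:KL-additive-reward}, exactly as Corollary \ref{cor:lambda_lower_bound} is obtained from Theorem \ref{thm:KL-bound}. First I would invoke the second part of Theorem \ref{thm:KL-additive-reward}: since $0 \le r(x) \le 1$, we have $\KL(P^{*m}_{\lambda}\|P^{m}_{n,\lambda}) \le \frac{1}{n}\sinh\!\left(\frac{1}{m\lambda}\right)^{2m}$. It therefore suffices to show that the hypothesis on $\lambda$ forces $\sinh\!\left(\frac{1}{m\lambda}\right)^{2m} \le n\epsilon$.

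The next step is to linearize $\sinh^2$. Writing $y = \frac{1}{m\lambda} \ge 0$ and using $\sinh(y)^2 = \frac{e^{2y} + e^{-2y} - 2}{4} \le \frac{e^{2y}-1}{4}$ (the inequality holds because $e^{-2y} \le 1$), I obtain $\sinh\!\left(\frac{1}{m\lambda}\right)^{2m} \le \left(\frac{e^{2/(m\lambda)}-1}{4}\right)^{m}$. Now apply the assumption $\lambda \ge \frac{2}{\log\!\left(1 + 4(n\epsilon)^{1/m}\right)}$; note the right-hand side is well-defined and positive since $n\epsilon > 0$. Because $m \ge 1$, this gives $\frac{2}{m\lambda} \le \frac{2}{\lambda} \le \log\!\left(1 + 4(n\epsilon)^{1/m}\right)$, hence $e^{2/(m\lambda)} \le 1 + 4(n\epsilon)^{1/m}$, i.e. $\frac{e^{2/(m\lambda)}-1}{4} \le (n\epsilon)^{1/m}$. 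Raising to the $m$-th power and chaining with the previous bounds yields $\KL(P^{*m}_{\lambda}\|P^{m}_{n,\lambda}) \le \frac{1}{n}\big((n\epsilon)^{1/m}\big)^m = \frac{1}{n}\,n\epsilon = \epsilon$, which is the claim.

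There is no genuine obstacle here: the corollary is a routine algebraic inversion of Theorem \ref{thm:KL-additive-reward}. The only points meriting a moment's care are (i) the $\sinh^2 \le (e^{2y}-1)/4$ estimate, which is what lets the logarithm appear cleanly, and (ii) the fact that the stated condition on $\lambda$ is slightly stronger than strictly needed — the sharpest threshold would be $\lambda \ge \frac{2}{m\log(1+4(n\epsilon)^{1/m})}$ — so one should remark that dropping the factor $m$ in the denominator only strengthens the hypothesis (valid since $m \ge 1$) while keeping the statement uncluttered.
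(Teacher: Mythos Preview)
Your proposal is correct and follows essentially the same inversion strategy as the paper. The only cosmetic difference is that the paper starts from the $\log(1+\frac{1}{n}\mathrm{CV}^{2m})$ form of Theorem~\ref{thm:KL-additive-reward} and re-bounds the coefficient of variation via Bhatia--Davis to reach $\frac{e^{2/(m\lambda)}-1}{4}$, whereas you start from the $\sinh$ bound and use the elementary estimate $\sinh^2(y)\le(e^{2y}-1)/4$ to reach the same quantity; your route is marginally cleaner, and your remark that the sharpest threshold would carry an extra factor $m$ in the denominator matches exactly what the paper's own computation yields before relaxing to the stated form.
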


We analyze how many samples are needed to achieve a target KL divergence when using different block lengths for Soft Best-of-$n$. This analysis reveals a fundamental tradeoff between the symbolwise ($m=1$) and blockwise ($m>1$) approaches. Consider the simplified bound in \eqref{eq:sinh_bound}. To achieve $\KL\left(P^*_{\lambda'} \|P^{m}_{n,\lambda} \right)'\leq \epsilon$, we need the previous bound to be at most $\epsilon$. For the symbolwise sampling, we fix a target $\lambda'$ and recall that the relationship between  $\lambda'$ and $\lambda$ for the blockwise sampling must satisfy $\lambda'=\lambda m$ to ensure equivalent optimal distributions $P^*_{\lambda'} = P^{*m}_\lambda$. From Corollary \ref{cor:lambda_additivereward}, if \begin{equation}
    \label{eq:lambda-match}
    \lambda' = \lambda m \geq \frac{2}{\log(1 + 4(n\epsilon)^{1/m})} \iff n \geq \frac{1}{\epsilon}\left(\frac{e^{\frac{2}{\lambda'}} - 1}{4}\right)^m,
\end{equation}
then $\KL\left(P^*_{\lambda'} \|P^{m}_{n,\lambda} \right)'\leq \epsilon.$ 
Consequently, $n=O\left( \frac{1}{\epsilon}e^{cm/\lambda'}\right),$ and this condition cannot be improved in general due to our lower bound. This shows that to achieve the same ``operating point'' of KL-reward tradeoff (determined by $\lambda'$), tilting a block of size $m$ requires exponentially more samples $n$ to approach the optimal Soft Best-of-$n$ distribution $P^*_{\lambda'}$.

This reveals a fundamental tension in sampling-based alignment strategies. While symbolwise sampling requires exponentially fewer samples to achieve a given point on the reward-KL curve, it necessitates evaluating the reward model at every symbol generation step. In practical applications with expensive reward models, this per-symbol evaluation can become prohibitively costly. On the other hand, blockwise sampling requires exponentially more samples to approach the optimal KL-reward distribution. 

\section{Future Work}
Future directions for this work include the implementation of Soft Best-of-$n$ sampling in real-world generative models, with analysis of resource consumption trade-offs for $n,m$ and $\lambda$. Our sample complexity analysis motivates the development of further controlled decoding strategies (e.g.,\cite{mudgalcontrolled}) that balance competing factors through hybrid approaches using dynamic block lengths or adaptive sampling to reduce computational overhead.  
Finally, this framework can be expanded to more sophisticated and realistic reward models, such as those exhibiting temporal dependencies or causal structures.

% \newpage
% \section*{Acknowledgment}

% \section{Acknowledgments}
% This work is supported by the National Science Foundation under grants CIF 2312667, FAI 2040880, and CIF 2231707 and the NSF-GRFP under grant DGE-2140743. We also thank Ahmad Beirami for insightful discussions about this project. 
% \newpage
\bibliographystyle{IEEEtran}
\bibliography{bib}

\onecolumn
\appendix

\paragraph{Roadmap.}
This appendix provides detailed proofs and technical analysis to support the main results of the paper. We begin in Section~\ref{sec:appendix_related} by expanding on the related literature and positioning our work within the broader context. In Section~\ref{sec:proof_upperbound}, we prove Theorem~\ref{thm:KL-bound}, establishing an $O(1/n)$ upper bound on the KL divergence for Soft Best-of-$n$ sampling. Next, Section~\ref{sec:lambda_lower_bound} develops a lower bound on the convergence rate with respect to the tilting parameter $\lambda$. Section~\ref{sec:sequence_upperbound} then demonstrates the sharpness of Theorem~\ref{thm:KL-bound} by constructing a binary example that attains the $O(1/n)$ rate. 

In Section~\ref{sec:relative_lower_bound}, we derive a relative reward bound, showing that the expected reward of Soft Best-of-$n$ sampling converges to that of the optimal tilted distribution at the same $O(1/n)$ rate. Finally, Section~\ref{sec:appendix_blockwise_sampling} focuses on blockwise sampling, providing key lemmas and analysis for sequences of tokens rather than individual tokens, and Sections~\ref{sec:proof_upperbound_blockwise} and \ref{sec:corollary_lambda_blockwise} extends these results by deriving specific corollaries for the blockwise case.

\subsection{Additional related work.} \label{sec:appendix_related}

\textbf{Alignment.} LLM alignment is generally done via Reinforcement Learning from Human Feedback (RLHF) \cite{christiano2017deep}, which uses reinforcement learning to steer an LLM's sampling distribution to maximize a reward signal, oftentimes with a KL penalty to enforce similarity to the pre-trained LLM's behavior. This can be done with a separate reward model such as in Proximal Policy Optimization \cite{schulman2017proximal} or directly via Direct Policy Optimization \cite{rafailov2024direct}. 

\textbf{Best-of-$n$.} Alternatively, Best-of-$n$ sampling is a simple inference-time approach for alignment \cite{stiennon2020learning, nakano2021webgpt, hilton2022measuringgoodhart} and has even been used in the development of industry models such as Llama 2 \cite{touvron2023llama}. There have been various methodological improvements on BoN sampling such as variational BoN, \cite{amini2024variational}, TreeBoN \cite{qiu2024treebon}, and Speculative BoN \cite{zhang2024accelerating} which all try to reduce the computational cost of sampling $n$ sequences. Other works such as \cite{gui2024bonbon} and \cite{sessa2024bond} propose training techniques to fine-tune a language model to sample from the BoN distribution while preserving model performance. 

\textbf{Prior Results.} Early works proposed an analytical formula for the KL divergence between BoN sampling and the reference model \cite{stiennon2020learning, hilton2022measuringgoodhart}, but recent work demonstrated that this formula is in fact an upper bound and overly-conservative characterization of KL \cite{beirami2024theoretical, mroueh2024information}, indicating that BoN sampling might perform better than expected in terms of KL-reward tradeoffs. Furthermore, \cite{yang2024asymptotics} demonstrated the asymptotic equivalence between BoN and KL-constrained reinforcement learning under certain conditions, providing theoretical justification for BoN's strong empirical performance. Empirically, \cite{gao2023scaling} and \cite{hilton2022measuringgoodhart} found that the KL-reward tradeoff for both RL and BoN methods scaled as $\sqrt{D_{\mathsf{KL}}}$ between the optimized and base distribution, and \cite{mroueh2024information} demonstrated that this relationship is an information-theoretic upper bound. 

\textbf{Process Reward Modeling in Reasoning.} There has been a recent surge of interest in properly developing models with reasoning capabilities by encouraging long generations where models can work through complex problems step-by-step \cite{wei2022chain}. To encourage proper reasoning, process reward models (PRMs) have emerged as a way to provide reward at each reasoning step of a model, and have been shown to improve reasoning capabilities over rewarding outcomes only \cite{lightman2023let, uesato2022solving}. PRMs benefit from exhibiting similar properties to the additive reward model mentioned above; intuitively, by rewarding on a finer-grained scale (during each reasoning step instead of rewarding the final output), we can sample ``symbolwise'' (where a symbol is a reasoning step) to more efficiently achieve a gain in reward. The work most similar to ours by Puri et al. \cite{puri2025probabilistic} uses a sampling approach similar to Soft Best-of-$n$ for PRMs. However, where our method would sample \textit{once} from the soft distribution over outputs for each generation, and then generate $n$ times from that chosen sample for the next Soft Best-of-$n$ round, \cite{puri2025probabilistic} instead maintains $n$ particles throughout generation and at each step samples with replacement $n$ times from the soft distribution to continue generation from these $n$ particles.

\subsection{Proof of the upper bound}\label{sec:proof_upperbound}

The proof of Theorem \ref{thm:KL-bound}, restated here for the sake of completeness, relies on Lemma \ref{lem:p-char} that provides a closed-form expression for $P_{n,\lambda}$. For simplicity, we denote $r_\lambda(x)\triangleq r(x)/\lambda.$ 

\ThmKLUpper*

We start by proving this lemma.

\LemSoftBon*

\begin{proof}
Following Definition 2, we can express $P_{n,\lambda}(y)$ by considering all possible ways that $y$ could be selected as the output. This occurs when some $X_i = y$ and that index $i$ is selected by $Z$:

\begin{align*}
P_{n,\lambda}(y) &= \sum_{i=1}^n \Pr(Y = y, Z = i) \\
&= \sum_{i=1}^n \Pr(X_i = y, Z = i) \\
&= \sum_{i=1}^n \Pr(X_i = y)\Pr(Z = i|X_i = y) \\
&= n\Pr(X_1 = y)\Pr(Z = 1|X_1 = y) \quad \text{(by symmetry)}
\end{align*}

Now, to compute $\Pr(Z = 1|X_1 = y)$, we need to consider all possible values that the remaining samples $X_2, \ldots, X_n$ could take. By the law of total probability:

\begin{align*}
&\Pr(Z = 1|X_1 = y) \\
&= \sum_{x_2 \in \mathcal{X}} \cdots \sum_{x_n \in \mathcal{X}} \Pr(Z = 1|X_1 = y, X_2 = x_2, \ldots, X_n = x_n)\\ &\times \Pr(X_2 = x_2, \ldots, X_n = x_n) \\
&= \sum_{x_2 \in \mathcal{X}} \cdots \sum_{x_n \in \mathcal{X}} \frac{e^{r(y)/\lambda}}{e^{r(y)/\lambda} + \sum_{i=2}^n e^{r(x_i)/\lambda}} \times \prod_{i=2}^n P(x_i) \\
&= \mathbb{E}\left[\frac{e^{r(y)/\lambda}}{e^{r(y)/\lambda} + \sum_{i=2}^n e^{r(X_i)/\lambda}}\right]
\end{align*}

where the last line follows from the fact that $X_2, \ldots, X_n$ are i.i.d.\ samples from $P$. Therefore:
\begin{align*}
P_{n,\lambda}(y) &= nP(y)\mathbb{E}\left[\frac{e^{r(y)/\lambda}}{e^{r(y)/\lambda} + \sum_{i=2}^n e^{r(X_i)/\lambda}}\right] \\
&= P(y)e^{r(y)/\lambda} \times \mathbb{E}\left[\frac{1}{\frac{1}{n}(e^{r(y)/\lambda} + \sum_{i=1}^{n-1} e^{r(X_i)/\lambda})}\right],
\end{align*}
expectation is with respect to the joint distribution of $(X_2, \dots, X_n)$. The inequality follows from Jensen's inequality $
\mathbb{E}[1/X] \geq 1/\mathbb{E}[X]$.
\end{proof}

We can now finally prove Theorem \ref{thm:KL-bound}.

\ThmKLUpper*

\begin{proof}[Proof of Theorem \ref{thm:KL-bound}:]
    For $P_\lambda^*$ given in \eqref{eq:Pstar-def} and  $ P_{n,\lambda}$ given in \eqref{eq:prob-exact}:
    \begin{align}
         \KL\left(P^*_\lambda \| P_{n,\lambda}\right) = \sum_x P^*_\lambda(x) \log \left(\frac{ P^*_\lambda(x) }{ P_{n,\lambda}(x)} \right) \\
         \leq \sum_x \frac{P(x)e^{r_\lambda(x)}}{\mathbb{E}_P[e^{r_\lambda(X)}]} \log\left( \frac{1/\mathbb{E}_P[e^{r_\lambda(X)}]}{1/\left(\frac{1}{n}e^{r_\lambda(x)}+\frac{n-1}{n}\mathbb{E}_P\left[e^{r_\lambda(X)} \right]\right)} \right)
    \end{align}
    where the inequality follows from plugging in definitions for $P_\lambda^*$ and $ P_{n,\lambda}$  and applying \eqref{eq:prob-lb}. Rearranging the terms, we have:
    \begin{align*}
        \KL\left(P^*_\lambda \| P_{n,\lambda}\right) &\leq  \sum_x  \frac{P(x)e^{r_\lambda(x)}}{\mathbb{E}_P[e^{r_\lambda(X)}]} \log\left( \frac{1}{n} \frac{e^{r_\lambda(x)}}{\mathbb{E}_P[e^{r_\lambda(X)}]}+ \frac{n-1}{n}\right) \\
        &\leq  \log \left[ \frac{1}{n}\left(  \sum_x  \frac{P(x)e^{2r_\lambda(x)}}{\mathbb{E}_P[e^{r_\lambda(X)}]^2}\right) +  \frac{n-1}{n}  \right] \tag*{\text{(Jensen's inequality)}} \\
        & =  \log \left[ \frac{1}{n}\left(    \frac{\mathbb{E}_P[e^{2r_\lambda(X)}]}{\mathbb{E}_P[e^{r_\lambda(X)}]^2}-1\right) +  1  \right]. 
    \end{align*}
    Equation \eqref{eq:KL-bound} follows by plugging in the definition of $ \mathsf{CV}\left(e^{r_\lambda(X)}\right)$. Since $\log (1+x)\leq x$ (assuming base $e$), then $\KL\left(P^*_\lambda \| P_{n,\lambda}\right)= O(1/n).$ 

    Now assume that the reward function $r(x)$ is bounded by 1, and denote $$\sigma^2\triangleq \mathbb{E}_P[e^{2r_\lambda(X)}]-\mathbb{E}_P[e^{r_\lambda(X)}]^2,~\mu = \mathbb{E}_P[e^{r_\lambda(X)}],~  c\triangleq e^{1/\lambda}. $$ From the Bhatia-Davis inequality \cite{bhatia2000better}, $\sigma^2 \leq \left(c-\mu\right)\left(\mu-1 \right)$. Since $\mu\in [1,c]$, a simple maximization yields
    \begin{align*}
       \mathsf{CV}\left(e^{r_\lambda(X)}\right)^2\leq  \max_{\mu \in [1,c]} \frac{\left(c-\mu\right)\left(\mu-1 \right)}{\mu^2} = \frac{(c-1)^2}{4c}.
    \end{align*}
    Consequently,
    \begin{align*}
        \KL\left(P^*_\lambda \| P_{n,\lambda}\right) &\leq \log\left(1+ \frac{1}{n}\frac{(e^{1/\lambda}-1)^2}{4e^{1/\lambda}} \right) \\
        &\leq \frac{1}{n}\frac{(e^{1/\lambda}-1)^2}{4e^{1/\lambda}} \\
        &= \frac{1}{n}\sinh \left(\frac{1}{2\lambda}\right)^2.
    \end{align*}
\end{proof}

\subsection{Proof of Corollary \ref{cor:lambda_lower_bound}}\label{sec:lambda_lower_bound}
We begin by restating Corollary \ref{cor:lambda_lower_bound}.
\CorLambda*

\begin{proof}

From Theorem \ref{thm:KL-bound}, we know that when $0 \leq r(x) \leq 1$, we have $\KL(P^*_\lambda\|P_{n,\lambda}) \leq \frac{1}{n}\sinh^2(\frac{1}{2\lambda})$. For the corollary's claim to hold, we need this upper bound to be at most $\epsilon$. Therefore, we start by writing:

\begin{equation}
\frac{1}{n}\sinh^2(\frac{1}{2\lambda}) \leq \epsilon
\end{equation}

Multiplying both sides by $n$, we obtain $\sinh^2(\frac{1}{2\lambda}) \leq n\epsilon$. Taking the square root of both sides and noting that since $\lambda > 0$, we have $\frac{1}{2\lambda} > 0$, we can write $\sinh(\frac{1}{2\lambda}) \leq \sqrt{n\epsilon}$. Then, applying the inverse hyperbolic sine ($\arcsinh$) to both sides yields:

\[\frac{1}{2\lambda} \leq \arcsinh(\sqrt{n\epsilon})\]

Now, we can use the fact that for $x > 0$, $\arcsinh(x) = \ln(x + \sqrt{x^2 + 1})$. Therefore:

\[\lambda \geq \frac{1}{2\ln(\sqrt{n\epsilon} + \sqrt{n\epsilon + 1})}\]

Note that for $x > 0$, $x + \sqrt{x^2 + 1} \geq 2x$ since
\begin{align*}
(x + \sqrt{x^2 + 1})^2 &= x^2 + 2x\sqrt{x^2 + 1} + (x^2 + 1) \\
&= 2x^2 + 2x\sqrt{x^2 + 1} + 1 \\
&\geq 2x^2 + 2x^2 + 1 \quad \text{(since $\sqrt{x^2 + 1} > x$ for $x > 0$)} \\
&> 4x^2
\end{align*}

Thus, $\ln(x + \sqrt{x^2 + 1}) \geq \ln(2x)$, which implies:

\[\frac{1}{2\ln(\sqrt{n\epsilon} + \sqrt{n\epsilon + 1})} \leq \frac{1}{\ln(1 + 4n\epsilon)}\]

Thus, $\lambda \geq \frac{1}{\log(1 + 4n\epsilon)}$ is a sufficient condition to ensure $D_{KL}(P^*_\lambda\|P_{n,\lambda}) \leq \epsilon$.
\end{proof}

\subsection{A sequence that attains the upper bound}\label{sec:sequence_upperbound}

In Theorem \ref{thm:KL-bound}, we established that $\KL\left(P_\lambda^* \| P_{n,\lambda}\right) \leq \log\left(1 + \frac{1}{n}\mathrm{CV}\left(e^{r_\lambda(X)}\right)^2\right)$, i.e., $\KL\left(P_\lambda^* \| P_{n,\lambda}\right) = O(1/n)$. We will now establish that this upper bound is, indeed, sharp. Let us construct a simple binary example. Let $\mathcal{X} = \{0,1\}$ with $P(1) = p$ and $P(0) = 1-p$ where $p \in (0,1)$. Define the reward function:

\begin{equation*}
r(x) = \begin{cases}
1 & \text{if } x = 1 \\
0 & \text{if } x = 0
\end{cases}
\end{equation*}

Then:
\begin{align*}
E_P[e^{r_\lambda(X)}] &= pe^{1/\lambda} + (1-p) \\
E_P[e^{2r_\lambda(X)}] &= pe^{2/\lambda} + (1-p)
\end{align*}

Therefore:
\begin{align*}
\text{Var}(e^{r_\lambda(X)}) &= E_P[e^{2r_\lambda(X)}] - (E_P[e^{r_\lambda(X)}])^2 \\
&= pe^{2/\lambda} + (1-p) - (pe^{1/\lambda} + (1-p))^2 \\
&= pe^{2/\lambda} + (1-p) - (p^2e^{2/\lambda} + 2p(1-p)e^{1/\lambda} + (1-p)^2) \\
&= p(1-p)e^{2/\lambda} - 2p(1-p)e^{1/\lambda} + (1-p)(p) \\
&= p(1-p)(e^{1/\lambda} - 1)^2
\end{align*}

Which, in turn, implies
\begin{equation*}
CV(e^{r_\lambda(X)})^2 = \frac{\text{Var}(e^{r_\lambda(X)})}{(E_P[e^{r_\lambda(X)}])^2} = \frac{p(1-p)(e^{1/\lambda} - 1)^2}{(pe^{1/\lambda} + (1-p))^2}
\end{equation*}

From Lemma \ref{lem:p-char}, for $x = 1$:
\begin{align*}
P_{n,\lambda}(1) &= pe^{1/\lambda} E\left[\frac{1}{\frac{1}{n}(e^{1/\lambda} + S_{n-1})}\right] \\
&= npe^{1/\lambda} E\left[\frac{1}{e^{1/\lambda} + S_{n-1}}\right]
\end{align*}

Now, we can write the Taylor expansion for $E\left[\frac{1}{e^{1/\lambda} + S_{n-1}}\right]$. Consider the function $g(x) = \frac{1}{e^{1/\lambda} + x}$ and let $S_{n-1} = \sum_{i=1}^{n-1} e^{r_\lambda(X_i)}$. Note that $g(x)$ is analytic and uniformly bounded by $1/e^{1/\lambda}$ for all $x \geq 0$. The second-order Taylor expansion of $g(x)$ around $\mu = E[S_{n-1}]$ is:
\begin{equation*}
g(x) = g(\mu) + g'(\mu)(x-\mu) + \frac{g''(\mu)}{2}(x-\mu)^2 + R_3(x)
\end{equation*}
The law of large numbers guarantees that for any $\varepsilon > 0$,
\begin{equation}
\Pr\bigl( |S_{n-1} - \mu| \ge \varepsilon \bigr) \to 0 \quad \text{as } n \to \infty.
\end{equation}
Define the high-probability event $
\mathcal{A} = \{\, |S_{n-1} - \mu| \le \varepsilon \,\}.$ On $\mathcal{A}$, $S_{n-1}$ lies within the interval $[\mu - \varepsilon, \mu + \varepsilon]$, a region on which the derivatives of $g$ are well-behaved and the Taylor expansion provides a uniformly accurate approximation (i.e., there exists a constant $C>0$ such that $|R_3(S_{n-1})| \le C |S_{n-1} - \mu|^3)$. On the complementary event $\mathcal{A}^c$, the Taylor expansion might not be as accurate; however, the probability of $\mathcal{A}^c$ decays exponentially in $n$. Therefore, even if the error in the Taylor expansion is large on $\mathcal{A}^c$, its overall contribution to expectations is negligible. Computing the derivatives of $g$:
\begin{align*}
g'(x) &= -\frac{1}{(e^{1/\lambda} + x)^2} \\
g''(x) &= \frac{2}{(e^{1/\lambda} + x)^3} \\
g'''(x) &= -\frac{6}{(e^{1/\lambda} + x)^4}
\end{align*}

Therefore:
\begin{align*}
g(\mu) &= \frac{1}{e^{1/\lambda} + (n-1)(pe^{1/\lambda} + (1-p))} \\
g'(\mu) &= -\frac{1}{(e^{1/\lambda} + (n-1)(pe^{1/\lambda} + (1-p)))^2} \\
g''(\mu) &= \frac{2}{(e^{1/\lambda} + (n-1)(pe^{1/\lambda} + (1-p)))^3}
\end{align*}

Taking expectations:
\begin{align*}
E[g(S_{n-1})] &= g(\mu) + g'(\mu)E[(S_{n-1}-\mu)] \\ & + \frac{g''(\mu)}{2}E[(S_{n-1}-\mu)^2] + E[R_3(S_{n-1})] \\
&= g(\mu) + 0 + \frac{g''(\mu)}{2}\text{Var}(S_{n-1}) + E[R_3(S_{n-1})]
\end{align*}

Substituting the values:
\begin{align*}
E\left[\frac{1}{e^{1/\lambda} + S_{n-1}}\right] &= \frac{1}{e^{1/\lambda} + (n-1)(pe^{1/\lambda} + (1-p))} \\
&+ \frac{(n-1)p(1-p)(e^{1/\lambda} - 1)^2}{(e^{1/\lambda} + (n-1)(pe^{1/\lambda} + (1-p)))^3} \\
&+ E[R_3(S_{n-1})]
\end{align*}

(This last term is definitely small. In particular, smaller than $O(1/n^2)$. We will take care of it later).

A similar calculation for $x = 0$ gives:
\begin{equation*}
P_{n,\lambda}(0) = \frac{n(1-p)}{1 + (n-1)(pe^{1/\lambda} + (1-p))} + O(\frac{1}{n^2})
\end{equation*}

This leads to our four final expressions:

\begin{align*}
P^*_\lambda(1) &= \frac{pe^{1/\lambda}}{pe^{1/\lambda} + (1-p)} \\
P^*_\lambda(0) &= \frac{1-p}{pe^{1/\lambda} + (1-p)} \\
P_{n,\lambda}(1) &= \frac{npe^{1/\lambda}}{e^{1/\lambda} + (n-1)(pe^{1/\lambda} + (1-p))} + O(1/n^2) \\
P_{n,\lambda}(0) &= \frac{n(1-p)}{1 + (n-1)(pe^{1/\lambda} + (1-p))} + O(1/n^2)
\end{align*}

We can now compute the KL divergence

\begin{align*}
&\KL(P^*_\lambda \| P_{n,\lambda}) = P^*_\lambda(1)\log\frac{P^*_\lambda(1)}{P_{n,\lambda}(1)} + P^*_\lambda(0)\log\frac{P^*_\lambda(0)}{P_{n,\lambda}(0)} \\
&= \frac{pe^{1/\lambda}}{pe^{1/\lambda} + (1-p)}\\ & \times \log\left(\frac{pe^{1/\lambda}}{pe^{1/\lambda} + (1-p)} \cdot \frac{e^{1/\lambda} + (n-1)(pe^{1/\lambda} + (1-p))}{npe^{1/\lambda}} + O(1/n^2)\right) \\
&+ \frac{1-p}{pe^{1/\lambda} + (1-p)} \\ & \times \log\left(\frac{1-p}{pe^{1/\lambda} + (1-p)} \cdot \frac{1 + (n-1)(pe^{1/\lambda} + (1-p))}{n(1-p)} + O(1/n^2)\right)
\end{align*}

Let's simplify the terms inside each logarithm. For the first term:
\begin{align*}
&\frac{e^{1/\lambda} + (n-1)(pe^{1/\lambda} + (1-p))}{n(pe^{1/\lambda} + (1-p))} \\
&= \frac{e^{1/\lambda} - (pe^{1/\lambda} + (1-p)) + n(pe^{1/\lambda} + (1-p))}{n(pe^{1/\lambda} + (1-p))} \\
&= 1 + \frac{e^{1/\lambda} - (pe^{1/\lambda} + (1-p))}{n(pe^{1/\lambda} + (1-p))} \\
&= 1 + \frac{(1-p)(e^{1/\lambda} - 1)}{n(pe^{1/\lambda} + (1-p))}
\end{align*}

Similarly for the second term:
\begin{align*}
&\frac{1 + (n-1)(pe^{1/\lambda} + (1-p))}{n(pe^{1/\lambda} + (1-p))} \\
&= 1 + \frac{1 - (pe^{1/\lambda} + (1-p))}{n(pe^{1/\lambda} + (1-p))} \\
&= 1 - \frac{p(e^{1/\lambda} - 1)}{n(pe^{1/\lambda} + (1-p))}
\end{align*}

This leads to 

\begin{align*}
&\KL(P^*_\lambda \| P_{n,\lambda}) = \\
&\frac{pe^{1/\lambda}}{pe^{1/\lambda} + (1-p)}\log\left(1 + \frac{(1-p)(e^{1/\lambda} - 1)}{n(pe^{1/\lambda} + (1-p))} + O(1/n^2)\right) \\
&+ \frac{1-p}{pe^{1/\lambda} + (1-p)}\log\left(1 - \frac{p(e^{1/\lambda} - 1)}{n(pe^{1/\lambda} + (1-p))} + O(1/n^2)\right)
\end{align*}

For the first logarithm, let's denote $a_1 = \frac{(1-p)(e^{1/\lambda} - 1)}{n(pe^{1/\lambda} + (1-p))}$. Then:
\begin{align*}
&\log(1 + a_1 + O(1/n^2)) = (a_1 + O(1/n^2))\\& - \frac{(a_1 + O(1/n^2))^2}{2} + O((a_1 + O(1/n^2))^3) \\
&= a_1 + O(1/n^2) - \frac{a_1^2}{2} - O(a_1/n^2) + O(1/n^3) \\
&= \frac{(1-p)(e^{1/\lambda} - 1)}{n(pe^{1/\lambda} + (1-p))} - \frac{(1-p)^2(e^{1/\lambda} - 1)^2}{2n^2(pe^{1/\lambda} + (1-p))^2} + O(1/n^3)
\end{align*}

Similarly for the second logarithm, let $a_2 = -\frac{p(e^{1/\lambda} - 1)}{n(pe^{1/\lambda} + (1-p))}$:
\begin{align*}
&\log(1 + a_2 + O(1/n^2)) = (a_2 + O(1/n^2))\\& - \frac{(a_2 + O(1/n^2))^2}{2} + O((a_2 + O(1/n^2))^3) \\
&= a_2 + O(1/n^2) - \frac{a_2^2}{2} - O(a_2/n^2) + O(1/n^3) \\
&= -\frac{p(e^{1/\lambda} - 1)}{n(pe^{1/\lambda} + (1-p))} - \frac{p^2(e^{1/\lambda} - 1)^2}{2n^2(pe^{1/\lambda} + (1-p))^2} + O(1/n^3)
\end{align*}

Multiplying by the respective coefficients and combining terms:
\begin{align*}
\KL(P^*_\lambda \| P_{n,\lambda}) &= \frac{pe^{1/\lambda}}{pe^{1/\lambda} + (1-p)} \cdot \frac{(1-p)(e^{1/\lambda} - 1)}{n(pe^{1/\lambda} + (1-p))} \\
&- \frac{pe^{1/\lambda}}{pe^{1/\lambda} + (1-p)} \cdot \frac{(1-p)^2(e^{1/\lambda} - 1)^2}{2n^2(pe^{1/\lambda} + (1-p))^2} \\
&+ \frac{1-p}{pe^{1/\lambda} + (1-p)} \cdot \left(-\frac{p(e^{1/\lambda} - 1)}{n(pe^{1/\lambda} + (1-p))}\right) \\
&- \frac{1-p}{pe^{1/\lambda} + (1-p)} \cdot \frac{p^2(e^{1/\lambda} - 1)^2}{2n^2(pe^{1/\lambda} + (1-p))^2} \\
&+ O(1/n^3)
\end{align*}

The first-order terms (in $1/n$) combine to:
\begin{equation*}
\frac{p(1-p)(e^{1/\lambda} - 1)^2}{n(pe^{1/\lambda} + (1-p))^2}
\end{equation*}

The second-order terms (in $1/n^2$) combine to:
\begin{equation*}
-\frac{p(1-p)(e^{1/\lambda} - 1)^2}{2n^2(pe^{1/\lambda} + (1-p))^2} \cdot \frac{p(1-p)(e^{1/\lambda} - 1)^2}{(pe^{1/\lambda} + (1-p))^2}
\end{equation*}

Therefore:
\begin{equation*}
\KL(P^*_\lambda \| P_{n,\lambda}) = \frac{p(1-p)(e^{1/\lambda} - 1)^2}{n(pe^{1/\lambda} + (1-p))^2} + O(1/n^2)
\end{equation*}

This matches our earlier coefficient of variation calculation:
\begin{equation*}
\frac{1}{n}CV(e^{r_\lambda(X)})^2 = \frac{1}{n}\frac{p(1-p)(e^{1/\lambda} - 1)^2}{(pe^{1/\lambda} + (1-p))^2}
\end{equation*}

\subsection{Proof of the lower bound}\label{sec:proof_lowerbound}

We begin by restating Theorem \ref{thm:KL-lowerbound}.

\ThmKLLower*

Let $$P^*_{\lambda}(x)\triangleq \frac{P(x)e^{r(x)}}{\mathbb{E}_P\left[e^{r(X)} \right]}$$ and $$P_{n,\lambda}(x)\triangleq P(x)e^{r(x)}\times \mathbb{E}_{P^{\otimes (n-1)}}\left[\frac{n}{e^{r(x)}+ \sum_{i=1}^{n-1} e^{r(X_i)}}\right] ,$$
where $r:\mathcal{X}\to [0,a].$ Note that $a$ will be $e^{1/\lambda}$ once we add the dependence on $\lambda$.

Denoting $\mu \triangleq  \mathbb{E}_P\left[e^{r(X)} \right]$, we have
\begin{align*}
    \mathsf{TV}\left(P^*_{\lambda} \|P_{n,\lambda} \right) &= \frac{1}{2}\sum_x \left|P^*_{\lambda}(x) - P_{n,\lambda}(x) \right|\\
    &= \frac{1}{2}\sum_x  \frac{P(x)e^{r(x)}}{\mu} \times  \left|1 -  \mathbb{E}_{P^{\otimes (n-1)}}\left[\frac{1}{\frac{e^{r(x)}}{n \mu}+ \left(\frac{n-1}{n} \right) \left(\frac{1}{n-1} \sum_{i=1}^{n-1} \frac{e^{r(X_i)}}{\mu}\right)}  \right]  \right|.
\end{align*}
Our goal is to bound the term
\begin{equation}
    L_n(x) \triangleq \mathbb{E}_{P^{\otimes (n-1)}}\left[\frac{1}{\frac{e^{r(x)}}{n \mu}+ \left(\frac{n-1}{n} \right) \left(\frac{1}{n-1} \sum_{i=1}^{n-1} \frac{e^{r(X_i)}}{\mu}\right)}  \right] .
\end{equation}
Denote $Y_i\triangleq  \frac{e^{r(X_i)}}{\mu}$, and
\begin{equation}
    S_{k} \triangleq \frac{1}{k} \sum_{i=1}^{k} Y_i. 
\end{equation}
Note that $\mathbb{E}[Y_i]=1$, and $S_k$ has finite variance and third moment, denoted by $\sigma^2$ and $\rho$, respectively. For all $x$, we have $S_{n-1} \geq \frac{1}{\mu}$ since $e^{r(x)}/\mu \geq 1/\mu$ for all $x$. This allows us to introduce the unit step function $u(t)$, we can thus rewrite $L_n(x)$ as
\begin{align*}
    L_n(x) &= \left(\frac{n}{n-1}\right) \mathbb{E}\left[\frac{1}{\frac{e^{r(x)}}{(n-1) \mu}+ S_{n-1}}  \right]
\end{align*}
We can now calculate this expectation and bound it with the Berry-Esseen theorem. First, note that $S_{n-1}$ is an average of i.i.d. random variables with:
\begin{itemize}
\item $E[Y_i] = 1$ (by definition)
\item $\text{Var}(Y_i) = \sigma^2$ (finite by assumption)
\item $E[|Y_i - E[Y_i]|^3] = \rho$ (finite by assumption)
\end{itemize}
Then, the Berry-Esseen theorem says
\begin{thm}[Berry-Esseen]
For any $t \in \mathbb{R}$:
\[\left|P(S_{n-1} \leq t) - \Phi\left(\frac{(t-1)\sqrt{n-1}}{\sigma}\right)\right| \leq \frac{C\rho}{\sigma^3\sqrt{n-1}}\]
where $\Phi$ is the standard normal CDF and $C$ is an absolute constant.
\end{thm}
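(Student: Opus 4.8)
The displayed statement is the classical Berry--Esseen theorem, specialized to the i.i.d.\ average $S_{n-1}=\frac{1}{n-1}\sum_{i=1}^{n-1}Y_i$ with $\mathbb{E}[Y_i]=1$, $\mathrm{Var}(Y_i)=\sigma^2$, and $\mathbb{E}[|Y_i-1|^3]=\rho<\infty$. The cleanest route is the Fourier-analytic one, in which the sup-distance between the two distribution functions is controlled through their characteristic functions. Writing $k=n-1$ and $W_k \triangleq \frac{(S_k-1)\sqrt{k}}{\sigma}$, with distribution function $F_k$, the claim is equivalent to $\sup_t|F_k(t)-\Phi(t)|\leq \frac{C\rho}{\sigma^3\sqrt{k}}$, since $P(S_{n-1}\leq t)=F_k\!\big(\tfrac{(t-1)\sqrt{k}}{\sigma}\big)$. (In the paper itself one would simply cite this as a standard result; the sketch below is the classical proof.)

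First I would invoke Esseen's smoothing inequality: for any $T>0$,
$$\sup_t|F_k(t)-\Phi(t)| \leq \frac{1}{\pi}\int_{-T}^{T}\left|\frac{\widehat{F_k}(\xi)-e^{-\xi^2/2}}{\xi}\right|d\xi + \frac{c_0}{T},$$
where $\widehat{F_k}$ is the characteristic function of $W_k$, $e^{-\xi^2/2}$ is that of $\Phi$, and $c_0$ is an absolute constant coming from the bound $\sup_t\Phi'(t)=\tfrac{1}{\sqrt{2\pi}}$. This reduces a sup-norm control of the CDFs to an $L^1$ control of the characteristic-function difference over a bounded frequency window.

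Next I would Taylor-expand the characteristic function $\phi$ of the centered summand $Y_i-1$: the finiteness of the first three moments gives $\phi(s)=1-\tfrac{\sigma^2 s^2}{2}+\theta(s)$ with $|\theta(s)|\leq \tfrac{\rho|s|^3}{6}$. Since $\widehat{F_k}(\xi)=\phi\!\big(\tfrac{\xi}{\sigma\sqrt{k}}\big)^{k}$, the elementary inequality $|a^{k}-b^{k}|\leq k\,|a-b|\max(|a|,|b|)^{k-1}$ combined with the Gaussian domination $\big|\phi\!\big(\tfrac{\xi}{\sigma\sqrt{k}}\big)\big|\leq e^{-c\xi^2/k}$ (valid on the window $|\xi|\lesssim \sigma^3\sqrt{k}/\rho$) yields the pointwise estimate $\big|\widehat{F_k}(\xi)-e^{-\xi^2/2}\big|\leq \tfrac{c_1\rho|\xi|^3}{\sigma^3\sqrt{k}}\,e^{-c'\xi^2}$ for $|\xi|\leq T$. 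Dividing by $|\xi|$ and integrating gives $\frac{1}{\pi}\int_{-T}^{T}\leq \tfrac{c_2\rho}{\sigma^3\sqrt{k}}\int_{\mathbb{R}}\xi^2 e^{-c'\xi^2}d\xi=\tfrac{c_3\rho}{\sigma^3\sqrt{k}}$. Choosing $T=\tfrac{c_4\sigma^3\sqrt{k}}{\rho}$ then makes the residual term $c_0/T$ also $O(\rho/(\sigma^3\sqrt{k}))$, and summing the two contributions produces the bound with $C=c_3+c_0/c_4$.

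The main obstacle is reconciling the frequency window with the cutoff $T$: the Taylor estimate for $\phi$ and the Gaussian domination are valid only when $|\xi|/(\sigma\sqrt{k})$ is small, i.e.\ exactly when $|\xi|\lesssim \sigma^3\sqrt{k}/\rho$, and this must match the $T$ needed to push $c_0/T$ down to the target rate, forcing $T\asymp \sigma^3\sqrt{k}/\rho$. The consistency of these choices rests on Lyapunov's inequality $\rho\geq\sigma^3$, which guarantees $\sigma^3\sqrt{k}/\rho\leq\sqrt{k}$ so the window is genuinely admissible (and simultaneously shows the bound is informative only once $\sqrt{k}\gtrsim \rho/\sigma^3$). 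In the present application the summands $Y_i=e^{r(X_i)}/\mu$ are bounded and nonnegative with finite $\sigma^2$ and $\rho$, so all hypotheses hold and the stated inequality follows.
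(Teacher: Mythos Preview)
Your proposal is correct, and you have in fact anticipated the paper's treatment precisely: the paper does not prove this statement at all but simply states it as the classical Berry--Esseen theorem (citing Berry 1941 and Esseen 1942) and immediately applies it to bound $|P(S_{n-1}\leq t)-\Phi(\cdot)|$ in the proof of the lower bound on $\TV(P^*_\lambda\|P_{n,\lambda})$. Your Fourier-analytic sketch via Esseen's smoothing inequality is the standard textbook route and is sound, but it goes well beyond what the paper requires; for the purposes of matching the paper, the one-line observation that this is a cited classical result suffices.
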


As a consequence of the theorem, we have the upper and lower bounds:
\begin{align*}
P(S_{n-1} \leq t) &\leq \Phi\left(\frac{(t-1)\sqrt{n-1}}{\sigma}\right) + \frac{C\rho}{\sigma^3\sqrt{n-1}} \\
P(S_{n-1} \leq t) &\geq \Phi\left(\frac{(t-1)\sqrt{n-1}}{\sigma}\right) - \frac{C\rho}{\sigma^3\sqrt{n-1}}
\end{align*}

Denoting $c=\frac{e^{r(x)}}{(n-1) \mu}$ and using that $E[X] = \int_0^{\infty} P(X > t) dt$, we have
\begin{align*}
&\mathbb{E}\left[\frac{1}{\frac{e^{r(x)}}{(n-1) \mu}+ S_{n-1}}\right] = \int_0^{\infty} \Pr\left[\frac{1}{c+S_{n-1}} > t \right] dt\\
&= \int_0^{\mu/(\mu c+1)} \Pr\left[S_{n-1} < \frac{1}{t}-c \right] dt \quad \text{(since }S_{n-1} \geq \frac{1}{\mu}\text{ always)}\\
&= \int_{\frac{1}{1-\delta+c}}^{\mu/(\mu c+1)} \Pr\left[S_{n-1} < \frac{1}{t}-c \right]dt + \int_0^{\frac{1}{1-\delta+c}} \Pr\left[S_{n-1} < \frac{1}{t}-c \right]dt\\
&\leq \int_{\frac{1}{1-\delta+c}}^{\mu/(\mu c+1)} \left[\Phi\left(\frac{\sqrt{n-1}}{\sigma}\left(\frac{1}{t} - c - 1\right)\right) + \frac{C\rho}{\sigma^3\sqrt{n-1}}\right] dt + \int_0^{\frac{1}{1-\delta+c}}1 dt\\
&\leq \int_{\frac{1}{1-\delta+c}}^{\mu/(\mu c+1)} \Phi\left(\frac{\sqrt{n-1}}{\sigma}\left(\frac{1}{t} - c - 1\right)\right)dt + \frac{C\rho}{\sigma^3\sqrt{n-1}} \cdot \frac{\mu}{\mu c+1} + \frac{1}{1-\delta+c}
\end{align*}

\begin{align*}
&\mathbb{E}\left[\frac{1}{c+ S_{n-1}}\right] \leq \int_{\frac{1}{1-\delta+c}}^{\mu/(\mu c+1)} \Phi\left(\frac{\sqrt{n-1}}{\sigma}\left(\frac{1}{t} - c - 1\right)\right)dt + \frac{C\rho}{\sigma^3\sqrt{n-1}} \cdot \frac{\mu}{\mu c+1} + \frac{1}{1-\delta+c}
\end{align*}

Now, for the main integral term, when $t > \frac{1}{1-\delta+c}$, we have $\frac{1}{t} - c - 1 < (1-\delta+c) - c - 1 = -\delta$. Therefore $\Phi\left(\frac{\sqrt{n-1}}{\sigma}(\frac{1}{t} - c - 1)\right) \leq \Phi\left(-\delta\frac{\sqrt{n-1}}{\sigma}\right)$. Using the bound for normal CDF: For $z > 0$,
$\Phi(-z) = 1 - \Phi(z) \leq \frac{1}{z\sqrt{2\pi}}e^{-z^2/2}$
Applying this with $z = \delta\frac{\sqrt{n-1}}{\sigma}$:
\begin{align*}
&\int_{\frac{1}{1-\delta+c}}^{\mu/(\mu c+1)} \Phi\left(\frac{\sqrt{n-1}}{\sigma}\left(\frac{1}{t} - c - 1\right)\right)dt\leq \frac{\mu}{\mu c+1} \cdot \frac{\sigma}{\delta\sqrt{2\pi(n-1)}}e^{-\delta^2(n-1)/(2\sigma^2)}
\end{align*}
Therefore:
\begin{align*}
\mathbb{E}\left[\frac{1}{c+ S_{n-1}}\right] &\leq \frac{\mu}{\mu c+1} \cdot \frac{\sigma}{\delta\sqrt{2\pi(n-1)}}e^{-\delta^2(n-1)/(2\sigma^2)} + \frac{C\rho}{\sigma^3\sqrt{n-1}} \cdot \frac{\mu}{\mu c+1} + \frac{1}{1-\delta+c}
\end{align*}

The exponential term decays very fast with n, so it can be absorbed into the error term. Note that $\delta$ appears as a free parameter in our bounds, with the only constraint that $\delta < 1-\frac{1}{\mu}$. Choosing $\delta = \frac{4C'}{\sqrt{n-1}}$ for some constant $C'$ (which is valid for large enough n since we need $\delta < 1-\frac{1}{\mu}$), we get:
\begin{align*}
\mathbb{E}\left[\frac{1}{c+ S_{n-1}}\right] &\leq 1 - \frac{\delta}{2} + \frac{C'}{\sqrt{n-1}} + O\left(\frac{1}{n^{3/2}}\right)\
&= 1 - \frac{2C'}{\sqrt{n-1}} + \frac{C'}{\sqrt{n-1}} + O\left(\frac{1}{n^{3/2}}\right)\
&= 1 - \frac{C'}{\sqrt{n-1}} + O\left(\frac{1}{n^{3/2}}\right)
\end{align*}
This gives us the desired lower bound on $|1 - L_n(x)|$:
\begin{align*}
|1 - L_n(x)| \geq \frac{C'}{\sqrt{n-1}} - O\left(\frac{1}{n^{3/2}}\right)
\end{align*}

Substituting this back into the TV distance:
\begin{align*}
\TV(P^*_\lambda \| P_{n,\lambda}) &= \frac{1}{2}\sum_x \frac{P(x)e^{r(x)}}{\mu}|1 - L_n(x)| \\
&\geq \frac{1}{2}\sum_x \frac{P(x)e^{r(x)}}{\mu}\left(\frac{C'}{\sqrt{n-1}} - O\left(\frac{1}{n^{3/2}}\right)\right) \\
&= \frac{1}{2}\left(\frac{C'}{\sqrt{n-1}} - O\left(\frac{1}{n^{3/2}}\right)\right)\sum_x \frac{P(x)e^{r(x)}}{\mu} \\
&= \frac{1}{2}\left(\frac{C'}{\sqrt{n-1}} - O\left(\frac{1}{n^{3/2}}\right)\right) \cdot 1 \\
&= \frac{C'}{2\sqrt{n-1}} - O\left(\frac{1}{n^{3/2}}\right)
\end{align*}
The last step uses the fact that $\sum_x \frac{P(x)e^{r(x)}}{\mu} = 1$ since $P^*_\lambda$ is a probability distribution. This lower bound on TV distance then implies through Pinsker's inequality that:
\begin{equation}
    D_{\KL}(P^*_\lambda \| P_{n,\lambda}) \geq 2 [\TV(P^*_\lambda \| P_{n,\lambda})]^2 \geq \Omega\left(\frac{1}{n}\right)
\end{equation}

\subsection{Relative Reward Bound.}\label{sec:relative_lower_bound}

While our previous analysis focused on establishing convergence rates in KL  divergence between Soft Best-of-$n$ sampling and the optimal tilted distribution, we now turn to analyzing how quickly the expected rewards converge. Specifically, we establish that the relative difference in expected rewards decays as $O(1/n)$, matching our KL divergence bounds. This provides a complete picture of how Soft Best-of-$n$ sampling approximates both the distribution and reward properties of the optimal solution. 

\ThmRewardBound*

\begin{remark}[Sub-Gaussian rewards via Donsker–Varadhan]
The \(O(n^{-1})\) relative reward gap in Theorem \ref{thm:relative_bound} relies on the reward being bounded.  
If instead the centered reward
\[
\tilde r \;=\; r_\lambda(X)-\mathbb{E}_{P_{n,\lambda}}\!\bigl[r_\lambda(X)\bigr]
\]
is merely $\sigma^2$-sub-Gaussian under \(P_{n,\lambda}\), i.e.
\(
\log \mathbb{E}_{P_{n,\lambda}}\!\bigl[\exp(\alpha\tilde r)\bigr] \le \alpha^{2}\sigma^{2}/2
\) for all \(\alpha\in\mathbb{R}\),
then combining Theorem 1 with the Donsker–Varadhan variational inequality yields
\[
\frac{\mathbb{E}_{P^*_\lambda}[r_\lambda]-\mathbb{E}_{P_{n,\lambda}}[r_\lambda]}
      {\mathbb{E}_{P^*_\lambda}[r_\lambda]}
\;\;\le\;\;
(\lambda+1)\,
\frac{\sinh\!\bigl(\tfrac{1}{2\lambda}\bigr)\,\sigma}
     {\sqrt{8\,n}}
\;=\;O\!\Bigl(n^{-1/2}\Bigr).
\]
Thus one can dispense with the bounded-reward assumption at the cost of a slower
\(1/\sqrt{n}\) rate.
\end{remark}

\subsection{Blockwise sampling}\label{sec:appendix_blockwise_sampling}

Having analyzed Soft Best-of-$n$ sampling for individual tokens, we now extend our analysis to sequences of tokens. In oarticular, we study how the sampling strategy performs when applied to blocks of $m$ tokens at once rather than token-by-token. This analysis reveals fundamental trade-offs between sequence length, sample complexity, and approximation quality that are crucial for practical applications. We begin by introducing a useful Lemma, similar to Lemma \ref{lem:p-char} but for blocks of size $m$. 
\begin{lem}
For a sequence $y^m \in \mathcal{X}^m$, $n \geq 1$, $\lambda > 0$, and a sequence of independent samples $X_1^m, \ldots, X_n^m$ drawn i.i.d. from $P_X^{\otimes m}$, let $Z$ be the random index drawn according to step 2 of Definition \ref{defn:n-tilt-blockwise}. Then the probability that the n-tilted sampling strategy outputs $y^m$ is given by:

\begin{align*}
P_{n,\lambda}(y^m) &= n \cdot P_X^{\otimes m}(y^m) \cdot \mathbb{E}\left[\frac{e^{r(y^m)/\lambda}}{e^{r(y^m)/\lambda} + \sum_{j=2}^n e^{r(X_j^m)/\lambda}}\right] \\
&\geq P_X^{\otimes m}(y^m)e^{r(y^m)/\lambda} \cdot \frac{1}{\frac{1}{n}e^{r(y^m)/\lambda} + \frac{n-1}{n}\left(\mathbb{E}_{P_X^{\otimes m}}[e^{r(X^m)/\lambda}]\right)}
\end{align*}

where
\[\mathbb{E}_{P_X^{\otimes m}}[e^{r(X^m)/\lambda}] = \left(\mathbb{E}_P[e^{r(X)/m\lambda}]\right)^m\]

due to the additive structure of the reward and independence.
\end{lem}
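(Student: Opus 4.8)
The plan is to repeat the argument of Lemma~\ref{lem:p-char} essentially verbatim, with single symbols replaced by length-$m$ blocks, and to invoke the additive structure of the reward only at the very end. First I would write
\[
P_{n,\lambda}(y^m) = \sum_{i=1}^n \Pr(X_i^m = y^m,\, Z = i),
\]
since returning the block $y^m$ is equivalent to some index $i$ with $X_i^m = y^m$ being selected by $Z$. Conditioning gives $\Pr(X_i^m = y^m,\, Z = i) = \Pr(X_i^m = y^m)\,\Pr(Z = i \mid X_i^m = y^m)$, and because $X_1^m,\dots,X_n^m$ are i.i.d.\ (hence exchangeable) every summand equals the $i = 1$ term. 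Thus $P_{n,\lambda}(y^m) = n\,P_X^{\otimes m}(y^m)\,\Pr(Z = 1 \mid X_1^m = y^m)$.

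Next I would evaluate $\Pr(Z = 1 \mid X_1^m = y^m)$ by the law of total probability over the remaining blocks $X_2^m,\dots,X_n^m$. Substituting the softmax selection rule from step~2 of Definition~\ref{defn:n-tilt-blockwise} and using independence,
\[
\Pr(Z = 1 \mid X_1^m = y^m) = \mathbb{E}\!\left[\frac{e^{r(y^m)/\lambda}}{e^{r(y^m)/\lambda} + \sum_{j=2}^n e^{r(X_j^m)/\lambda}}\right],
\]
where the expectation is over $X_2^m,\dots,X_n^m \overset{\text{i.i.d.}}{\sim} P_X^{\otimes m}$. Combining this with the previous display yields the claimed equality, and pulling $n$ and $e^{r(y^m)/\lambda}$ through the expectation puts it in the form with a $\tfrac1n$ prefactor inside the denominator, exactly mirroring \eqref{eq:prob-exact}.

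For the lower bound I would apply Jensen's inequality $\mathbb{E}[1/W] \geq 1/\mathbb{E}[W]$ with $W = \tfrac1n\bigl(e^{r(y^m)/\lambda} + \sum_{j=1}^{n-1} e^{r(X_j^m)/\lambda}\bigr)$; by linearity $\mathbb{E}[W] = \tfrac1n e^{r(y^m)/\lambda} + \tfrac{n-1}{n}\,\mathbb{E}_{P_X^{\otimes m}}[e^{r(X^m)/\lambda}]$, which gives the stated bound. Finally, to compute $\mathbb{E}_{P_X^{\otimes m}}[e^{r(X^m)/\lambda}]$, I would use $r(x^m) = \tfrac1m\sum_{i=1}^m r(x_i)$, so $e^{r(x^m)/\lambda} = \prod_{i=1}^m e^{r(x_i)/(m\lambda)}$, and independence of the coordinates factorizes the expectation into $\bigl(\mathbb{E}_P[e^{r(X)/(m\lambda)}]\bigr)^m$. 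There is no real obstacle here beyond bookkeeping; the only point needing care is the $1/m$ normalization inside the reward, which is exactly what turns $\lambda$ into $m\lambda$ in the per-symbol moment generating function and underlies the identity $P^{*m}_\lambda = P^{*\otimes m}_{m\lambda}$ used elsewhere.
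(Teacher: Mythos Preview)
Your proposal is correct and follows essentially the same approach as the paper's proof: decompose over the index $i$, use exchangeability to reduce to $n$ times the $i=1$ term, express $\Pr(Z=1\mid X_1^m=y^m)$ as an expectation over the remaining blocks, apply Jensen's inequality to $t\mapsto 1/t$, and finally factorize the moment generating function using the additive reward and i.i.d.\ structure. The paper's argument is identical in structure and detail.
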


\begin{proof}
Following Definition 2 for sequences, we have:
\begin{align*}
P_{n,\lambda}(y^m) &= \sum_{i=1}^n \text{Pr}(Y^m = y^m, Z = i) \\
&= \sum_{i=1}^n \text{Pr}(X_i^m = y^m, Z = i) \\
&= \sum_{i=1}^n \text{Pr}(X_i^m = y^m) \cdot \text{Pr}(Z = i|X_i^m = y^m)
\end{align*}

By symmetry, we can focus on $i=1$ and multiply by $n$:
\begin{align*}
P_{n,\lambda}(y^m) &= n \cdot P_X^{\otimes m}(y^m) \cdot \text{Pr}(Z = 1|X_1^m = y^m) \\
&= n \cdot P_X^{\otimes m}(y^m) \cdot \mathbb{E}\left[\frac{e^{r(y^m)/\lambda}}{e^{r(y^m)/\lambda} + \sum_{j=2}^n e^{r(X_j^m)/\lambda}}\right]
\end{align*}

Since $f(x) = 1/x$ is convex for $x > 0$, we can apply Jensen's inequality:
\begin{align*}
\mathbb{E}\left[\frac{e^{r(y^m)/\lambda}}{e^{r(y^m)/\lambda} + \sum_{j=2}^n e^{r(X_j^m)/\lambda}}\right] &\geq \frac{e^{r(y^m)/\lambda}}{e^{r(y^m)/\lambda} + (n-1)\mathbb{E}[e^{r(X^m)/\lambda}]} \\
&= \frac{1}{\frac{1}{n} + \frac{n-1}{n}\frac{\mathbb{E}[e^{r(X^m)/\lambda}]}{e^{r(y^m)/\lambda}}}
\end{align*}

For the expectation term, due to the additive structure of the reward and independence:
\begin{align*}
\mathbb{E}[e^{r(X^m)/\lambda}] &= \mathbb{E}\left[\exp\left(\frac{1}{m\lambda}\sum_{i=1}^m r(X_i)\right)\right] = \mathbb{E}\left[\prod_{i=1}^m e^{r(X_i)/m\lambda}\right] = \left(\mathbb{E}_P[e^{r(X)/m\lambda}]\right)^m
\end{align*}

Combining these results yields the stated bound.
\end{proof}

\subsection{Convergence Upper Bound for the Blockwise Case}\label{sec:proof_upperbound_blockwise}

We begin by restating Theorem \ref{thm:KL-additive-reward}.

\ThmKLAdditive*

\begin{proof}
We begin by examining the blockwise n-tilted sampling procedure on sequences of length $m$. In this setting, we sample $n$ sequences $X_1^m,\ldots,X_n^m$ independently from $P_X^{\otimes m}$ and select one according to the blockwise $n$-tilted sampling from Def. \ref{defn:n-tilt-blockwise}. Let us start with the KL divergence expression:
\begin{align*}
\KL(P^*_\lambda \|P_{n,\lambda}) &= \sum_{y^m} P^*_\lambda(y^m) \log \frac{P^*_\lambda(y^m)}{P_{n,\lambda}(y^m)} \\
&= \sum_{y^m} \frac{P^{\otimes m}_X(y^m)e^{r(y^m)/\lambda}}{\mathbb{E}_{P^{\otimes m}_X}[e^{r(X^m)/\lambda}]} \log \frac{P^{\otimes m}_X(y^m)e^{r(y^m)/\lambda}}{\mathbb{E}_{P^{\otimes m}_X}[e^{r(X^m)/\lambda}] \cdot P_{n,\lambda}(y^m)}
\end{align*}

Using the lower bound from Lemma 2:
\[
P_{n,\lambda}(y^m) \geq P^{\otimes m}_X(y^m)e^{r(y^m)/\lambda} \cdot \frac{1}{\frac{1}{n}e^{r(y^m)/\lambda} + \frac{n-1}{n}\mathbb{E}_{P^{\otimes m}_X}[e^{r(X^m)/\lambda}]}
\]

Substituting this bound:
\begin{align*}
\KL(P^*_\lambda \|P_{n,\lambda}) &\leq \sum_{y^m} \frac{P^{\otimes m}_X(y^m)e^{r(y^m)/\lambda}}{\mathbb{E}_{P^{\otimes m}_X}[e^{r(X^m)/\lambda}]} \log \frac{\frac{1}{n}e^{r(y^m)/\lambda} + \frac{n-1}{n}\mathbb{E}_{P^{\otimes m}_X}[e^{r(X^m)/\lambda}]}{\mathbb{E}_{P^{\otimes m}_X}[e^{r(X^m)/\lambda}]} \\
&= \sum_{y^m} \frac{P^{\otimes m}_X(y^m)e^{r(y^m)/\lambda}}{\mathbb{E}_{P^{\otimes m}_X}[e^{r(X^m)/\lambda}]} \log \left(1 + \frac{1}{n}\left(\frac{e^{r(y^m)/\lambda}}{\mathbb{E}_{P^{\otimes m}_X}[e^{r(X^m)/\lambda}]} - 1\right)\right)
\end{align*}

Since $\log$ is concave, we can apply Jensen's inequality:
\begin{align*}
&\leq \log\left(1 + \frac{1}{n}\sum_{y^m} \frac{P^{\otimes m}_X(y^m)e^{r(y^m)/\lambda}}{\mathbb{E}_{P^{\otimes m}_X}[e^{r(X^m)/\lambda}]} \left(\frac{e^{r(y^m)/\lambda}}{\mathbb{E}_{P^{\otimes m}_X}[e^{r(X^m)/\lambda}]} - 1\right)\right) \\
&= \log\left(1 + \frac{1}{n}\left(\frac{\mathbb{E}_{P^{\otimes m}_X}[e^{2r(X^m)/\lambda}]}{(\mathbb{E}_{P^{\otimes m}_X}[e^{r(X^m)/\lambda}])^2} - 1\right)\right)
\end{align*}

For the additive reward model where $r(x^m) = \frac{1}{m}\sum_{i=1}^m r(x_i)$, using the independence of samples:
\begin{align*}
\frac{\mathbb{E}_{P^{\otimes m}_X}[e^{2r(X^m)/\lambda}]}{(\mathbb{E}_{P^{\otimes m}_X}[e^{r(X^m)/\lambda}])^2} - 1 
&= \frac{\mathbb{E}_{P^{\otimes m}_X}[e^{\frac{2}{m\lambda}\sum_{i=1}^m r(X_i)}]}{(\mathbb{E}_{P^{\otimes m}_X}[e^{\frac{1}{m\lambda}\sum_{i=1}^m r(X_i)}])^2} - 1 \\
&= \left(\frac{\mathbb{E}_P[e^{2r(X)/m\lambda}]}{(\mathbb{E}_P[e^{r(X)/m\lambda}])^2}\right)^m - 1 \\
&= \text{CV}(e^{r(X)/m\lambda})^{2m}
\end{align*}

where $\text{CV}(e^{r(X)/m\lambda})^2 = \frac{\mathbb{E}_P[e^{2r(X)/m\lambda}]}{(\mathbb{E}_P[e^{r(X)/m\lambda}])^2} - 1$ is the squared coefficient of variation.

Therefore:
\[
\KL(P^*_\lambda \|P_{n,\lambda}) \leq \log\left(1 + \frac{1}{n}\text{CV}(e^{r(X)/m\lambda})^{2m}\right)
\]
\end{proof}

\subsection{Proof of Corollary \ref{cor:lambda_additivereward}}\label{sec:corollary_lambda_blockwise}

We begin by restating Corollary \ref{cor:lambda_additivereward}.

\CorLambdaAdditive*

\begin{proof}
From Theorem 4, we have:
\[
\KL(P^*_\lambda \|P_{n,\lambda}) \leq \log\left(1 + \frac{1}{n}\text{CV}(e^{r(X)/m\lambda})^{2m}\right)
\]

When $0 \leq r(x) \leq 1$, we can bound the coefficient of variation using properties of exponential functions and the Bhatia-Davis inequality:
\begin{align*}
\text{CV}(e^{r(X)/m\lambda})^2 &= \frac{\mathbb{E}_P[e^{2r(X)/m\lambda}]}{(\mathbb{E}_P[e^{r(X)/m\lambda}])^2} - 1 \\
&\leq \frac{e^{2/m\lambda} - 1}{4} 
\end{align*}

In our case, let $Y = e^{r(X)/m\lambda}$. Since $0 \leq r(x) \leq 1$, we have:
\begin{align*}
m &= \min_x e^{r(x)/m\lambda} = e^0 = 1 \\
M &= \max_x e^{r(x)/m\lambda} = e^{1/m\lambda}
\end{align*}

Let $\mu = \mathbb{E}[Y]$. By Bhatia-Davis inequality:
\[
\text{Var}(Y) \leq (e^{1/m\lambda} - \mu)(\mu - 1)
\]

The right-hand side is quadratic in $\mu$, and its maximum occurs at $\mu = \frac{e^{1/m\lambda} + 1}{2}$. At this value:
\begin{align*}
\text{Var}(Y) &\leq \left(e^{1/m\lambda} - \frac{e^{1/m\lambda} + 1}{2}\right)\left(\frac{e^{1/m\lambda} + 1}{2} - 1\right) \\
&= \left(\frac{e^{1/m\lambda} - 1}{2}\right)\left(\frac{e^{1/m\lambda} - 1}{2}\right) \\
&= \frac{(e^{1/m\lambda} - 1)^2}{4}
\end{align*}

Now, recall that the coefficient of variation is:
\[
\text{CV}(Y)^2 = \frac{\mathbb{E}[Y^2]}{(\mathbb{E}[Y])^2} - 1 = \frac{\text{Var}(Y)}{(\mathbb{E}[Y])^2} = \frac{\mathbb{E}[e^{2r(X)/m\lambda}]}{(\mathbb{E}[e^{r(X)/m\lambda}])^2} - 1
\]

Therefore:
\begin{align*}
\text{CV}(e^{r(X)/m\lambda})^2 = \frac{\text{Var}(Y)}{(\mathbb{E}[Y])^2} \leq \frac{(e^{1/m\lambda} - 1)^2}{4(\mathbb{E}[Y])^2} \leq \frac{e^{2/m\lambda} - 1}{4}
\end{align*}

where the last inequality follows from $\mathbb{E}[Y] \geq 1$ (since $Y \geq 1$ almost surely).

Therefore:
\begin{align*}
\KL(P^*_\lambda \|P_{n,\lambda}) &\leq \log\left(1 + \frac{1}{n}\left(\frac{e^{2/m\lambda} - 1}{4}\right)^m\right) \leq \epsilon
\end{align*}

For this inequality to hold, it suffices to have:
\begin{align*}
1 + \frac{1}{n}\left(\frac{e^{2/m\lambda} - 1}{4}\right)^m &\leq e^\epsilon \\
\frac{1}{n}\left(\frac{e^{2/m\lambda} - 1}{4}\right)^m &\leq e^\epsilon - 1 \\
\left(\frac{e^{2/m\lambda} - 1}{4}\right)^m &\leq n(e^\epsilon - 1) \\
\frac{e^{2/m\lambda} - 1}{4} &\leq (n(e^\epsilon - 1))^{1/m} \\
e^{2/m\lambda} &\leq 1 + 4(n(e^\epsilon - 1))^{1/m} \\
\frac{2}{m\lambda} &\leq \log(1 + 4(n(e^\epsilon - 1))^{1/m})
\end{align*}

Solving for $\lambda$:
\[
\lambda \geq \frac{2}{m\log(1 + 4(n(e^\epsilon - 1))^{1/m})}
\]
\end{proof}

\end{document}